\newcommand{\keywords}[1]{\par\addvspace\baselineskip
\noindent\keywordname\enspace\ignorespaces#1}
\newtheorem{problemI}{Problem I}
\newtheorem{condition}{Condition}
\newcommand{\evend}{\stackrel{e}{\sim}}
\newcommand{\oddd}{\stackrel{o}{\sim}}
\newcommand{\nc}{|\!\!c}
\newcommand{\dol}{\$}
\newcommand{\ket}[1]{\left|#1\right>}
\newcommand{\vectorx}{\mbox{\boldmath $x$}}
\begin{document}

\mainmatter  

\title{Quantum Pushdown Automata with a Garbage Tape}

\titlerunning{Quantum Pushdown Automata with a Garbage Tape}

%
%
\author{Masaki Nakanishi}%
%
\authorrunning{Masaki Nakanishi}

\institute{Faculty of Education, Art and Science, Yamagata University,\\
Yamagata, 990--8560, Japan\\
\mailsa\\
}

%
%

\toctitle{Quantum Pushdown Automata with a Garbage Tape}
\tocauthor{Masaki Nakanishi}
\maketitle

\begin{abstract}
Several kinds of quantum pushdown automaton models have been proposed,
and their computational power is investigated intensively. However,
for some quantum pushdown automaton models, it is not known whether
quantum models are at least as powerful as classical counterparts or
not. This is due to the reversibility restriction. In this paper, we
introduce a new quantum pushdown automaton model that has a garbage
tape. This model can overcome the reversibility restriction by
exploiting the garbage tape to store popped symbols. We show that
the proposed model can simulate any quantum pushdown automaton with a
classical stack as well as any probabilistic pushdown automaton. We
also show that our model can solve a certain promise problem exactly while
deterministic pushdown automata cannot. These results imply that our
model is strictly more powerful than classical counterparts in the
setting of exact, one-sided error and non-deterministic computation.
\keywords{quantum pushdown automata, deterministic pushdown automata, quantum computation models}
\end{abstract}

\section{Introduction}
  One important question in quantum computing is whether  a
  computational gap exists between  models that are allowed to  use quantum
  effects and 
  models that are not. Several types of quantum computation models
  have been proposed, including quantum finite automata, quantum counter automata, and quantum
  pushdown automata. 
   Quantum finite automata are the simplest model of 
   quantum computation, and have been investigated 
   intensively\cite{amb-qfa-strength,amb-2qfa_classical,amb-exact-qfa,bro-qfa,cia-qfa,hir-qfa1,hir-qfa2,kon-qfa,moo-q_automata,pas-qfa,cem-qca,yak-realtime_qa,yak-amplification,yak-succinctness,yak-unbounded-error-qfa}. 
 Several quantum automata augmented with additional computational resources have also been proposed, including
  quantum counter automata and quantum pushdown automata 
  \cite{bon-counter,gol-qpa,kra-counter,moo-q_automata,mur-dqpa,naka-qcpda,cem-qca,yam-kcounter,yam-counter}.

  It might be a surprising result that some of simple quantum computation models can be less powerful than classical
  counterparts\cite{kon-qfa,yam-kcounter,yam-counter} due to the reversibility restriction.
    Thus, it is a natural question what kinds of restrictions
  make quantum models less powerful than classical counterparts, and what kinds of computational resources
  make quantum models more powerful. Motivated by those questions, quantum pushdown automata have been investigated.
  Quantum pushdown automata were first proposed in \cite{moo-q_automata}, but their model is 
  the generalized quantum pushdown automata whose evolution does not have to be unitary. 
   Then Golovkins proposed quantum pushdown automata including unitarity criteria\cite{gol-qpa}, and he 
  showed that 
  quantum pushdown automata can recognize every regular language and
  some non-context-free languages. 
  However, it is still open whether 
  Golovkins's model of quantum pushdown automata are more powerful than probabilistic
  pushdown automata or not. In \cite{mur-dqpa}, it is shown that a certain promise problem can be
  solved exactly by Golovkins's model of quantum pushdown automata while it cannot be solved by deterministic
  pushdown automata. However, it is not known whether Golovkins's model can simulate any
  deterministic pushdown automaton or not. This is because quantum computation models must be reversible while
  pop operation deletes the stack-top symbol, which is not a reversible operation.
  In \cite{naka-qcpda}, a quantum pushdown automaton model that has a classical stack is proposed, and it is shown that
  the model is  strictly more powerful than classical counterparts in the setting of one-sided error as well as
  non-deterministic computation. 
  
  The above mentioned results are for the models whose state transitions are described by unitary operators.
  It is known that by allowing more general operators such as trace preserving completely positive (TPCP) maps, quantum finite automata
  can simulate classical counterparts as well as several quantum finite automata mentioned above\cite{hir-qfa1,hir-qfa2}.
  These results were generalized and it was shown how to define general quantum operators for other models in \cite{yak-unbounded-error-qfa}.
  For counter automata and pushdown automata, it is also known that generalized quantum models (i.e., the models
  that can use TPCP maps) can simulate classical counterparts\cite{cem-qca,yak-realtime_qa}.

  In this paper, we focus on the restricted quantum computation models (i.e., the models whose state transitions are described by unitary operators) 
  rather than the general models (i.e., the models whose state transitions are described by TPCP maps). 
  As mentioned above, it is known that the generalized quantum computation models can
  simulate classical counterparts and sometimes can be strictly more powerful than classical counterparts. 
  Nevertheless, studying restricted models is important. That is, our goal is to investigate what kinds of restriction makes quantum 
  models less powerful and under what kinds of restrictions quantum models are still more powerful than classical counterparts.
  This could lead to understand the source of the power of quantum computation in architecturally restricted models such as quantum
  automata.
 
    Motivated by these discussions, we introduce a new model of quantum pushdown automata, called quantum pushdown automata with
  a garbage tape. This model has a garbage tape on which popped symbols are stored, and thus,
  we can pop the stack-top symbol preserving reversibility. The garbage tape is a write-only memory,
  and thus, classical pushdown automata cannot exploit it. 
  A quantum computation model that has a write-only memory was proposed in \cite{yak-write-only-2}. The
  model uses a write-only memory in order to control interference between distinct computation paths.
  In our model, the write-only garbage tape is restricted to store popped symbols.
  Also the similar notion of garbage tapes were proposed in \cite{cia-qfa,pas-qfa}. In those models, a garbage tape
  is used to make transitions reversible. Our model is constructed so as to take advantages of both of a write-only
   tape and a garbage tape.

  Another motivation is that 
  it is expected that investigating quantum pushdown automata reveals how last-in first-out manner of memory access
  affects (or limits) quantum computation. However, for this purpose, Golovkins's model\cite{gol-qpa} is too restrictive on pop operation, i.e.,
  we can pop a stack-top symbol only if we can delete stack-top symbol preserving reversibility.
  Thus, we cannot identify from which the impossibilities come from, reversibility or last-in first-out 
  manner of memory access.
  In contrast, our model is useful for this purpose since pop operations can always be executed
  preserving reversibility.
  
  In this paper, we show that the proposed model
can simulate any quantum pushdown automaton with a classical stack, which is 
proposed in \cite{naka-qcpda}, as well as any classical pushdown automaton.
It is known that quantum pushdown automata with a classical stack are strictly more powerful than classical
counterparts in the setting of one-sided error and non-deterministic computation\cite{naka-qcpda}. Thus, so is our model.
We also show that our model can solve a certain promise problem (Problem~I) exactly while deterministic 
pushdown automata cannot. This implies that our model is strictly more powerful than
classical counterparts also in the setting of exact computation. It is a common technique to apply the pumping lemma (or 
Ogden's lemma\cite{ogden}, which is a generalization of the pumping lemma) in order to show that a language is not context-free,
i.e., pushdown automata cannot recognize the language.
However, our problem is a promise problem. Thus, we cannot apply the pumping lemma to our case.\footnote{As far
as the author knows, \cite{mur-dqpa} is
the only exception in which the pumping lemma is used for a promise problem. The technique in \cite{mur-dqpa} can be
applied only to the limited cases.} 
In \cite{ama-pumping}, the pumping lemma is proved through the analysis of pushdown automata. 
We modify their notion of {\it full state},
and use it to show the impossibility by directly analyzing time evolution of pushdown automata. This is a new technique
to prove that a certain promise problem cannot be computed by pushdown automata.
For OBDD models, an impossibility proof for a certain partial function, 
which is a function counterpart of promise problems, 
was shown recently in \cite{abl-qbdd}.


 This paper is organized as follows: In Sect.~2, we define quantum
pushdown automata with a garbage tape. In Sect.~3, we show how to simulate
classical pushdown automata and quantum pushdown automata with a classical stack by quantum pushdown automata with
a garbage tape. In Sect.~4, we show there is a promise problem that quantum pushdown automata
with a garbage tape can solve exactly while deterministic pushdown automata cannot. In Sect.~5, we discuss comparison between
quantum pushdown automata with and without a garbage tape.

\section{Preliminaries}
 A quantum pushdown automaton with a garbage tape (QPAG)  has an   
 input tape, a stack 
 and a garbage tape.
A QPAG also has a finite state control. 
 The input tape contains a classical input string, and its tape head is implemented by qubits
 that represent the position of the tape head. The stack and the garbage tape are 
 implemented by qubits that represent contents of the stack and the garbage tape, respectively.
  The finite state control is also implemented by qubits that represents the current state.
  A QPAG
 reads the stack top symbol and  the input symbol pointed by the input tape-head, and then
 evolves
 as follows: The tape head can move to the right or stay at the same position, 
 the finite state control moves to the next state, and a stack symbol is pushed to the stack
 or popped from the stack. When we pop a symbol from the stack, the popped symbol is
 written on the garbage tape, moving the garbage tape head to the right. This allows a QPAG
 to pop the stack top symbol preserving reversibility.
 We define QPAGs 
 formally as follows. 
 \begin{definition}
  A quantum pushdown automaton with a garbage tape (QPAG)
  is defined 
  as the following 7-tuple:
$
  M=(Q, \Sigma,\Gamma, \delta, q_0, Q_{acc}, Q_{rej}),   
$
  where $Q$ is a set of states, $\Sigma$ is a set of input
  symbols including the  
  left and the right endmarkers $\{\nc, \dol \}$, respectively, $\Gamma$
  is a set of stack symbols including the bottom symbol $Z$, $\delta$
  is a quantum state transition 
  function ($\delta:(Q\times\Sigma\times\Gamma\times Q\times G\cup\{\varepsilon, pop\} \times \{0, 1\} \}) 
  \longrightarrow \bbbc$), where $G(\subseteq  (\Gamma\setminus \{Z\})^+)$ is a finite set and
  $(\Gamma \setminus \{Z\})^+$ is the set of all nonempty strings of finite length from alphabet $\Gamma \setminus \{Z\}$, 
  $q_0$ is the initial 
  state, 
  $Q_{acc}$ $(\subseteq Q)$ 
  is the set of accepting states, and $Q_{rej}$ $(\subseteq Q)$ is
  the set of 
  rejecting states, where $Q_{acc}\cap Q_{rej}=
  \emptyset$. 
  \qed
 \end{definition}
 
 $\delta(q,a,b,q', b', D)=\alpha$ means that the amplitude of the transition from $q$ 
 to $q'$ updating the input tape head to $D$ ($D = 1$ means `right' and $D
= 0$ means
 `stay') and pushing $b'$ to the stack (or popping the stack-top symbol if $b'=pop$) is $\alpha$ when  
 reading input symbol $a$ and  stack
 symbol $b$. A configuration of
 a QPAG  is 
 $(q, k, w_{s}, w_{g})$, where 
 $q\in Q$ is the current state of the finite state control,
  $k$ is the position of the input tape head, and 
  $w_{s}$ and $w_{g}$ are the strings on the stack and the garbage tape, respectively. 
 We store a configuration of a QPAG in a quantum register, 
 where a basis  state is described as $\ket{q, k, 
  w_{s}, w_{g}}$.
  For  input string $\vectorx$, we define the time
  evolution operator 
  $U^{\vectorx}$ as follows: 
\begin{eqnarray*}
 \lefteqn{U^{\vectorx}(\ket{q, k, w_{s}b, w_{g}})}&&\\
&=&\sum_{q'\in Q, b'\in G\cup\{\varepsilon, pop\}, D\in\{0,1\}}  \delta(q,x(k),b,q',b', D)
\ket{q', k+D, w_{s}', w_{g}'},
\end{eqnarray*}
  where $x(k)$ is the $k$-th input symbol of input $\vectorx$, 
\begin{eqnarray*}
w_{s}'& =& \left\{ \begin{array}{ll}w_{s}bb'  & \mbox{ if } b' \neq pop\\
                                          w_{s} & \mbox{ if } b' = pop \end{array}\right.\\
 \mbox{and } w_{g}' &=& \left\{ \begin{array}{ll}w_{g} &\mbox{ if } b' \neq pop\\
                                         w_{g}b &\mbox{ if } b' = pop 
                                         \mbox{ ($b$ is the popped stack-top symbol).}
                                         \end{array}\right.
\end{eqnarray*}
  If $U^{\vectorx}$ is unitary (for any 
input string $\vectorx$), 
  then the corresponding QPAG is 
  well-formed. A well-formed QPAG is considered valid in
  terms of the quantum 
  theory. We consider only well-formed QPAGs.
  Let the initial quantum state and the initial position of the input tape head be
  $q_0$ and `0', respectively. We define $\ket{\psi_0}$ as 
  $\ket{\psi_0}=\ket{q_0,0, Z, \varepsilon}$. 
  We also define $E_{non}$, $E_{acc}$ and
  $E_{rej}$ as follows: 
\begin{eqnarray*} 
   E_{non}&= &span\{\ket{q,k,w_s, w_g}| q\not\in Q_{acc} \mbox{ and } q\not\in Q_{rej} \},\\
   E_{acc}&=& span\{\ket{q,k, w_s, w_g}| q\in Q_{acc}\}, \   E_{rej}= span\{\ket{q,k, w_s, w_g}| q\in Q_{rej}\}.
\end{eqnarray*}
  We define observable $\cal O$ as ${\cal O}=E_{non}\oplus E_{acc}\oplus E_{rej}$.
  For notational simplicity, we 
  define the outcome of a measurement corresponding to $E_j$ as $j$ for $j\in\{non, acc, rej\}$.  
  A QPAG computation proceeds as follows: 

  \begin{description}
  \item[(a)] $U^{\vectorx}$ is applied to $\ket{\psi_i}$, and we obtain 
	 $\ket{\psi_{i+1}}=U^{\vectorx}\ket{\psi_i}$.
  \item[(b)]  $\ket{\psi_{i+1}}$ is measured with respect to 
	 ${\cal O}$. Let $\ket{\phi_j}$ be the projection of
	     $\ket{\psi_{i+1}}$ to $E_j$. Then each outcome $j$ is
	     obtained with 
	     probability $|\ket{\phi_j}|^2$. Note that this measurement causes
	     $\ket{\psi_{i+1}}$ to collapse to
	     $\frac{1}{|\ket{\phi_j}|}\ket{\phi_j}$, where $j$ is 
	     the obtained outcome.
  \item[(c)] If the outcome of the measurement is $acc$ or $rej$, the automaton
        outputs the measurement result and halts. Otherwise, go to (a).
  \end{description}

To check well-formedness, we show the following theorem.
\begin{theorem}
\label{theorem:well-formed}
A QPAG $M$ is well-formed if the quantum state transition function of $M$ 
satisfies the following conditions.
\begin{enumerate}
\item $\forall (q, a, b)\in Q\times \Sigma \times \Gamma$, 
\[\Sigma_{q'\in Q, b'\in G\cup\{\varepsilon,pop\}, D\in\{0,1\}} 
|\delta(q,a,b, q',b',D)|^2 = 1,\]

\item $\forall (q_1, a, b)\neq (q_2, a, b) \in Q\times \Sigma \times \Gamma$, 
\[\Sigma_{q'\in Q, b'\in G\cup\{\varepsilon,pop\}, D\in\{0,1\}} 
\delta^*(q_1,a,b, q',b',D) \delta(q_2,a,b, q',b',D) = 0,\]

\item 
\begin{enumerate}
\item $\forall (q_1, a, b_1)\neq (q_2, a, b_2)\in Q\times \Sigma \times \Gamma$, 
$\forall b_3\in G\cup\{\varepsilon\}$,
\[\Sigma_{q'\in Q, D\in\{0,1\}} 
\delta^*(q_1,a,b_1, q',\varepsilon,D) \delta(q_2,a,b_2, q',b_3 b_1,D) = 0,\]

\item $\forall (q_1, a, b_1)\neq (q_2, a, b_2)\in Q\times \Sigma \times \Gamma$, 
$\forall b_3\in G\cup\{\varepsilon\}$,
\[\Sigma_{q'\in Q, D\in\{0,1\}} 
\delta^*(q_1,a,b_1, q',pop ,D) \delta(q_2,a,b_2, q',b_3,D) = 0,\]
\end{enumerate}

\item $\forall (q_1, a_1, b)\neq (q_2, a_2, b)\in Q\times \Sigma \times \Gamma$,
\[\Sigma_{q'\in Q, b\in G\cup\{\varepsilon, pop\}}
\delta^*(q_1,a_1,b, q',b',0) \delta(q_2,a_2,b, q',b',1) = 0,\]

\item 
\begin{enumerate}
\item $\forall (q_1, a_1, b_1)\neq (q_2, a_2, b_2)\in Q\times \Sigma \times \Gamma$, 
$\forall D_1\neq D_2 \in\{0,1\}$, 
$\forall b_3\in G\cup\{\varepsilon\}$, 
\[\Sigma_{q'\in Q}
\delta^*(q_1,a_1,b_1, q',\varepsilon, D_1) \delta(q_2,a_2,b_2, q', b_3 b_1, D_2) = 0,\]

\item $\forall (q_1, a_1, b_1)\neq (q_2, a_2, b_2)\in Q\times \Sigma \times \Gamma$, 
$\forall D_1\neq D_2 \in \{0,1\}$, 
$\forall b_3\in G\cup \{\varepsilon\}$, 
\[\Sigma_{q'\in Q}
\delta^*(q_1,a_1,b_1, q',pop , D_1) \delta(q_2,a_2,b_2, q', b_3, D_2) = 0,\]
\end{enumerate}

\end{enumerate}
\end{theorem}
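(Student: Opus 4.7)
The plan is to verify that $U^{\vectorx}$ preserves the inner product on the computational basis of configurations, thereby establishing the isometry relation $U^{\vectorx\dagger} U^{\vectorx} = I$. Since the garbage tape retains every popped symbol and each push writes the new material onto the top of the stack, the pre-image of any basis configuration under $U^{\vectorx}$ can be read off explicitly, so $U^{\vectorx}$ is also surjective and hence unitary.

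For the diagonal case $c = (q, k, w_s b, w_g)$, I would compute
\[\|U^{\vectorx}\ket{c}\|^2 = \sum_{q', b', D} |\delta(q, x(k), b, q', b', D)|^2,\]
which equals $1$ by Condition~1; here distinct triples $(q', b', D)$ yield distinct target basis states $\ket{q', k+D, w_s', w_g'}$, so no cross-terms arise.

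For the off-diagonal case $c_1 \neq c_2$, I would expand
\[\bra{c_1} U^{\vectorx\dagger} U^{\vectorx} \ket{c_2} = \sum_{c'} \overline{\bra{c'} U^{\vectorx} \ket{c_1}}\, \bra{c'} U^{\vectorx} \ket{c_2}.\]
A summand contributes only when $c'$ is reachable from both sources. Since each transition touches only the stack top, the garbage tail, and the head position (by $0$ or $1$), the existence of a common $c'$ pins down rigid relationships between the passive parts of $c_1$ and $c_2$ and determines the actions taken from each side. I would dispatch each admissible combination to one of the listed conditions:
\begin{itemize}
\item Same head position, matching stacks and garbage, $q_1 \neq q_2$: the sum reduces to the orthogonality relation in Condition~2, whose range over $(q', b', D)$ already includes push, $\varepsilon$, and pop.
\item Same head position and same $D$ but differing stacks: the stack-equality $w_{s,1} b_1 b'_1 = w_{s,2} b_2 b'_2$ forces one push string to extend the other, so after factoring one side reads as ``push $\varepsilon$'' and the other as ``push $b_3 b_1$'', giving the sum the form of Condition~3(a). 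If one side pops instead, the garbage-tail constraint pairs it with a ``push $b_3$'' on the other side, yielding Condition~3(b).
\item Different head positions (forcing opposite $D$s) with an unchanged stack top: Condition~4 applies.
\item Different head positions together with a stack-modifying action: the same push/pop alignment analysis dispatches the sum to Condition~5(a) or~5(b).
\end{itemize}

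The main obstacle is the bookkeeping in the stack-alignment step: one has to argue that every pair $(b'_1, b'_2)$ of push strings producing a common resulting stack can be brought into the canonical form $(\varepsilon, b_3 b_1)$ used in Conditions~3(a) and~5(a) (possibly after swapping the roles of $c_1$ and $c_2$), and likewise for each push-versus-pop conflict handled by Conditions~3(b) and~5(b). A secondary obstacle is to verify that the listed conditions collectively exhaust every pair of source configurations and action-pair capable of producing a common target, so that no cross-term is missed or double-counted. Once these matchings are in place, every summand in the expansion vanishes and $U^{\vectorx}$ is an isometry, completing the proof.
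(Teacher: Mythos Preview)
Your approach matches the paper's: both establish that the columns of $U^{\vectorx}$ are orthonormal by case-analyzing pairs of source configurations (same vs.\ different head position, same vs.\ different stack contents, same vs.\ different garbage tail) and dispatching each case to the corresponding numbered condition. One caveat: your surjectivity claim (``the pre-image of any basis configuration can be read off explicitly'') does not actually follow from Conditions~1--5, which constrain only the columns of $U^{\vectorx}$ and say nothing about every target basis state being reached with nonzero amplitude; the paper glosses over the same point by asserting that orthonormal columns already give unitarity.
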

\begin{proof}
The matrix $U^{\vectorx}$ is unitary if and only if 
the columns of $U^{\vectorx}$ are orthonormal. 
The condition (1) implies that each column of $U^{\vectorx}$ is normalized.
The rest of the conditions implies any two distinct columns are orthogonal.
The condition (2) is for the columns corresponding to the configurations
in which only the state is different. 
The conditions (3-a) and (3-b) are for the columns corresponding to
the configurations in which 
the position of the tape-head is the same but the stack contents are different.
The condition (4) is for the columns corresponding to the configurations in which
the position of the tape-head is different but the stack contents are the same.
The conditions (5-a) and (5-b) are for the columns corresponding to the
configurations in which the position of the tape head is different and the
stack contents are also different.
Note that in the case of (3-b) and (5-b), the contents of 
the garbage tape are different between the two configurations; one is shorter
than the other by one symbol. Also note that in the case of the rest, the contents
of the garbage tape are the same between the two configurations.
\qed
\end{proof}

\section{Simulation of QCPDAs}
In this section, we show that a QPAG can
simulate a quantum pushdown automaton with a classical stack (QCPDA).
Since QCPDAs can simulate any probabilistic pushdown automata\cite{naka-qcpda}, QPAGs can
simulate any probabilistic pushdown automata as well.
We describe the definitions of probabilistic pushdown automata
 and QCPDAs in Appendices A and B, respectively, or readers may refer to \cite{naka-qcpda}. 
A quantum pushdown automaton with a classical stack(QCPDA) is a quantum pushdown automaton
whose classical stack operations are determined by measurement results. 
We can use
the garbage tape so that if we measure the garbage tape, the stack contents will be identical
among all the basis states contained in the resulting superposition. Therefore, we can simulate a QCPDA
by a QPAG.
\begin{theorem}
\label{theorem:qcpda_simulation}
Let $M_{qc}=(Q, \Sigma, \Gamma, \delta, q_0, \sigma, Q_{acc}, Q_{rej})$ be a QCPDA. Then, there exists a QPAG $M_q$
such that for any input, the acceptance probability of $M_q$ is the same as
that of $M_{qc}$.
\end{theorem}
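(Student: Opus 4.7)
The plan is to apply the principle of deferred measurement, using the garbage tape of $M_q$ as an ancilla that coherently records every ``measurement outcome'' of $M_{qc}$. Since the observable ${\cal O}$ never touches the garbage tape, tracing it out reproduces the mixed state that $M_{qc}$ reaches through its explicit measurements, so the accept/reject/non-halting probabilities of the two machines will agree on every input.

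Concretely, I would regard each step of $M_{qc}$ as a quantum transition on the finite control (depending on the input symbol and the classical stack top) followed by a measurement whose outcome $m$, drawn from a finite set $M$, deterministically prescribes one of the stack actions ``push $s_m$'', ``pop'', or ``no-op''. I would then build $M_q$ on the enlarged stack alphabet $\Gamma\times M$ so that each step of $M_{qc}$ is simulated by a constant-length block of QPAG transitions: the quantum transition on the finite control is copied verbatim, and the stack action is performed coherently with an $m$-tag attached. On a ``push $s_m$'' branch, $M_q$ pushes the tagged pair $(s_m,m)$; on a ``pop'' branch, $M_q$ pops the tagged top, which is deposited onto the garbage tape and thus records the outcome that was active when that symbol was originally pushed; on a ``no-op'' branch, $M_q$ inserts an auxiliary push followed by an immediate pop of a dedicated tagged dummy symbol, so that the garbage tape still receives a unique marker for that outcome. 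In every case, branches corresponding to distinct outcomes end up in mutually orthogonal joint configurations of stack and garbage tape.

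To verify well-formedness I would invoke Theorem~\ref{theorem:well-formed}: condition (1) reduces to the norm-preservation condition already satisfied by $M_{qc}$'s amplitudes, while conditions (2)--(5) reduce to orthogonality between branches, which is guaranteed by the $m$-tags that $M_q$ places on its stack or garbage tape. Correctness would then be established by induction on the number of simulated steps: after the block simulating one step of $M_{qc}$, the reduced state of $M_q$ obtained by tracing out the garbage tape coincides with the state of $M_{qc}$ at the same time, so the probabilities assigned by ${\cal O}$ to \emph{acc}, \emph{rej}, and \emph{non} are identical in the two machines.

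The most delicate step is the no-op case: an outcome that leaves $M_{qc}$'s classical stack unchanged does not by itself deposit any symbol on $M_q$'s garbage tape, which would permit coherent interference between branches that $M_{qc}$ has already decohered. The push-then-pop gadget above fixes this, but one must verify that the amplitudes chosen for the auxiliary push and the immediate pop yield a joint block of transitions that satisfies Theorem~\ref{theorem:well-formed} simultaneously for all outcomes $m$ and all possible stack tops; this bookkeeping across coexisting branches is the main technical obstacle.
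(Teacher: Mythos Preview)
Your approach is correct and follows the same deferred-measurement strategy as the paper: record enough outcome information on the garbage tape (and stack) so that branches corresponding to distinct QCPDA measurement results become orthogonal, whence tracing out the garbage reproduces $M_{qc}$'s mixed state. The paper's implementation is simpler than yours: rather than tagging pushed stack symbols and reserving the push-then-pop gadget for the no-op case, it uniformly follows \emph{every} QCPDA transition by a push and an immediate pop of a fresh stack symbol encoding the label $\sigma(q')$ (three QPAG steps per QCPDA step), so the garbage tape directly records the full sequence of outcome labels and the bookkeeping you flag as ``the main technical obstacle'' disappears.
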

\begin{proof}
For a transition of $M_{qc}$  from state $q$ to $q'$ moving the input tape head
to $D$, we construct the corresponding transitions of $M_q$, which consist of three
successive transitions, as follows: 
Note that the stack operation of $M_{qc}$ is determined solely by the state $q'$ to which it transits,
 denoted by $\sigma(q')$.
We add two new states $q_a$ and $q_b$ to $Q$ and also add $\sigma(q')$ to $\Gamma$. 
Then, we replace the original transition
with the transition from $q$ to $q_a$ such that the stack operation is the same as the original transition ($\sigma(q')$), 
the direction of the tape head is $D$ and the transition probability is also the same. We define the transition
from $q_a$ to $q_b$, whose probability is one, as a transition pushing the label $\sigma(q')$ to the stack, the input tape head
staying at the same position. We also define the transition from $q_b$ to $q'$, whose probability is one, as
a transition popping $\sigma(q')$ from the stack and moving it to the garbage tape, the input tape head
staying at the same position. This records the history of stack operations in the garbage tape.
Thus, if the history of stack operations are different between two computation paths, they do not
interfere with each other since the contents of the garbage tape are different. This means that
if we measure the garbage tape, 
the contents of the stack are identical between any basis states contained in the resulting superposition at any moment
of computation. In other words, if we trace out the garbage tape, then, the stack configuration is not
in a superposition but in a classical mixture of basis states. Thus, it can be regarded as a classical stack,
and the resulting QPAG $M_q$ simulates the original QCPDA $M_{qc}$.
\qed
\end{proof}

It is known that QCPDAs can recognize a certain non-context-free language with one-sided error\cite{naka-qcpda}. 
This means that
QPAGs are strictly more powerful than classical pushdown automata in the setting of
one-sided error as well as non-deterministic computation.
\begin{corollary}
The class of languages recognized by one-sided error QPAGs properly
includes the class of languages recognized by one-sided error probabilistic pushdown automaton as well as
by non-deterministic pushdown automaton. \qed
\end{corollary}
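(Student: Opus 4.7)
My plan is to derive the corollary as an immediate consequence of Theorem~\ref{theorem:qcpda_simulation} combined with the known separation result for QCPDAs from \cite{naka-qcpda}. The argument splits naturally into the inclusion direction and the properness direction, and nothing new needs to be proved from scratch.

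For the inclusion direction, I would note that every non-deterministic PDA is a special case of a one-sided error probabilistic PDA (uniformize the choices and set the error bound appropriately), and every one-sided error probabilistic PDA is simulated by a one-sided error QCPDA by the result of \cite{naka-qcpda} cited in the opening paragraph of Section~3. Chaining this with Theorem~\ref{theorem:qcpda_simulation}, which gives a QPAG reproducing the acceptance probability of any given QCPDA, I obtain that every language recognized with one-sided error by a probabilistic PDA (and in particular by a non-deterministic PDA) is recognized with one-sided error by a QPAG. The only subtlety to state clearly is that the acceptance-probability-preserving simulation of Theorem~\ref{theorem:qcpda_simulation} automatically preserves the one-sided error property, since the zero acceptance probability on non-members of the language is carried over exactly.

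For the properness direction, I would invoke the result from \cite{naka-qcpda} that there is a non-context-free language recognized with one-sided error by some QCPDA; by Theorem~\ref{theorem:qcpda_simulation} this language is also recognized with one-sided error by a QPAG. Since non-deterministic PDAs recognize exactly the context-free languages, no non-deterministic PDA recognizes this language, giving the proper inclusion on the non-deterministic side. For the probabilistic side, I would appeal to the corresponding separation already established in \cite{naka-qcpda} for one-sided error probabilistic PDAs, transported via the same simulation.

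The main obstacle, really the only nontrivial point, is making sure that the one-sided error property is preserved by the simulation in Theorem~\ref{theorem:qcpda_simulation}; everything else is a bookkeeping combination of cited results. Since Theorem~\ref{theorem:qcpda_simulation} matches acceptance probabilities on every input, one-sidedness transfers verbatim, so this obstacle dissolves once it is pointed out. Hence the corollary follows with essentially a one-line argument referencing the two ingredients.
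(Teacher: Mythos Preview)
Your proposal matches the paper's own argument: the corollary is stated without a separate proof and is obtained directly from Theorem~\ref{theorem:qcpda_simulation} together with the cited results of \cite{naka-qcpda} (simulation of PPAs by QCPDAs, and the existence of a non-context-free language recognized with one-sided error by a QCPDA), exactly as you outline, including your observation that acceptance-probability-preserving simulation trivially preserves one-sidedness. One minor caveat: uniformizing the nondeterministic choices of an NPDA does \emph{not} yield a bounded one-sided error PPA, since the acceptance probability on yes-instances can be exponentially small, so that particular link in your inclusion chain is loose; this does not damage the overall argument, however, because the required inclusions are already supplied by the results in \cite{naka-qcpda} that you invoke anyway.
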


\section{Possibility and Impossibility of Solving a Certain Promise Problem}
We say that two strings, $u$ and $v$, have even (resp. odd) distinctions, denoted by $u \stackrel{e}{\sim} v$ (resp. 
$u \stackrel{o}{\sim} v$), if $|u| = |v|$ and 
$u$ and $v$ are different at even (resp. odd) number of positions.
For example, $1100\evend 1111$ since the third and the fourth bits are different between the two strings, and
$1000\oddd 1111$ since the second, the third and the fourth bits are different between the two strings.
We define a promise problem, Problem I, as follows:

\begin{problemI}
\ 
The input is promised to be of the form $w_1\# w_2 \# w_3$, where $w_1, w_2\in \{a,b,c\}^n$ and $w_3\in\{a,b,c,d\}^n$.
\begin{description}
\item[Yes-instances] are formed by the strings $w_1 \# w_2 \# w_3$ such that
\[((w_1 \stackrel{e}{\sim} w_2^R) \mbox{ xor } (w_1 \stackrel{e}{\sim} w_3^R)) = 1.\]
\item[No-instances] are formed by the strings $w_1 \# w_2 \# w_3$ such that
\[((w_1 \stackrel{e}{\sim} w_2^R) \mbox{ xor } (w_1 \stackrel{e}{\sim} w_3^R)) = 0.\]
\qed
\end{description}
\end{problemI}


We show that QPAGs can solve Problem I exactly while deterministic pushdown
automata cannot solve it. This result combined with Theorem~\ref{theorem:qcpda_simulation} 
implies that QPAGs are strictly more powerful than
classical pushdown automata in the setting of exact computation.

\begin{theorem}
\label{theorem:qpag_p1}
There exists a QPAG that solves Problem I exactly.
\end{theorem}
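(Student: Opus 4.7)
My plan is to construct a QPAG whose computation splits into two interfering branches $A$ and $B$, selected by a single bit of the finite-state control. On the left endmarker $\nc$ the machine applies a Hadamard-like transition that sets the state to $(\ket{q^A}+\ket{q^B})/\sqrt{2}$. While reading $w_1$, both branches synchronously push each input symbol onto the stack, so afterwards the stack holds $Zw_1$ in both branches. During $w_2$, branch $A$ pops the stack and compares the popped symbol with the current input symbol, multiplying the transition amplitude by $-1$ whenever they differ, while branch $B$ leaves the stack alone and merely advances the tape head; during $w_3$ the roles of $A$ and $B$ are swapped. Finally, on the right endmarker $\dol$ the machine applies a transition that maps $\ket{q^A_{fin}} \mapsto (\ket{q_{acc}}+\ket{q_{rej}})/\sqrt{2}$ and $\ket{q^B_{fin}} \mapsto (-\ket{q_{acc}}+\ket{q_{rej}})/\sqrt{2}$, after which the automatic measurement delivers the answer.

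The key design choice is that the stack and the garbage tape end up in the same configuration in both branches: the stack is reduced back to $Z$, and the garbage tape contains $w_1$ in reverse order, because each branch pops all $n$ pushed symbols (during $w_2$ in branch $A$, during $w_3$ in branch $B$). Thus the two branches occupy the same basis state apart from the finite control and they interfere at the final step. Let $a$ be the parity of the number of positions at which $w_1$ and $w_2^R$ differ, and let $b$ be the same quantity for $w_1$ and $w_3^R$. Because popping the stack while reading $w_2$ pairs input position $i$ with $w_1[n-i+1]$, branch $A$ accumulates an overall phase $(-1)^a$ and branch $B$ accumulates $(-1)^b$. A short calculation then shows that after the final transition the amplitude on $\ket{q_{acc}}$ equals $((-1)^a-(-1)^b)/2$ and the amplitude on $\ket{q_{rej}}$ equals $((-1)^a+(-1)^b)/2$, so the machine accepts with certainty on yes-instances ($a\neq b$) and rejects with certainty on no-instances ($a=b$).

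The main obstacle is verifying well-formedness through Theorem~\ref{theorem:well-formed}. I partition the states into disjoint $A$- and $B$-subsets, and no transition crosses between them except at the two endmarkers, so inside the push/pop phases every transition is deterministic with amplitude $\pm 1$, and the orthogonality conditions (2)--(5) reduce to checking that two distinct input configurations always yield distinct output configurations; this is immediate, since the pushed or popped symbol together with the head movement uniquely determines how the stack and the garbage tape change. At each endmarker only one state is put into a superposition of two outcomes, and the branching transitions act as $2\times 2$ Hadamards whose normalization and column orthogonality are immediate by inspection. For $(q,a,b)$ triples not covered by these explicit rules I simply add an identity self-loop so that condition (1) holds everywhere. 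Carrying out these finite-case verifications is the most tedious part of the argument, but presents no real difficulty.
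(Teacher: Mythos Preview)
Your proposal is correct and follows essentially the same approach as the paper: run two deterministic sub-computations in superposition---one comparing $w_1$ with $w_2^R$, the other with $w_3^R$---arranged so that the stack and garbage tape coincide at the right endmarker, and then extract the XOR of the two parities via a Deutsch--Jozsa-style interference step. The only cosmetic difference is that the paper tracks each parity bit in the finite state (branching into four states at the first $\#$ and using the standard $\ket{-}$-ancilla phase kickback) whereas you apply the $(-1)$ phase directly on each mismatch; this does not change the argument.
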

\begin{proof}
We use the same technique as in Theorem~3.1 of \cite{mur-dqpa}.
We construct a QPAG, $M$,  that solves Problem I as follows:
We consider two sub-automata, $M_1$ and $M_2$, such that $M_1$ (resp. $M_2$)  computes whether $w_1 \evend w_2^R$
(resp. $w_1 \evend w_3^R$), and run them in a superposition.
It is straightforward to see that $M_1$ and $M_2$ can be implemented by reversible deterministic pushdown automata 
with a garbage tape,
which is a special case of QPAGs,
and we can construct $M_1$ and $M_2$ so that the contents of the garbage tape at the moment of reading 
the right-endmarker can be 
the same between the two sub-automata. Then, we utilize the algorithm in \cite{cle-quantum} (the improved Deutsch-Jozsa algorithm\cite{deutsch-jozsa}) 
to compute the exclusive-or exactly using the two sub-automata
as the oracle for Deutsch's problem\cite{deu-quantum}.
We show the transition function of $M$ in Appendix C.
\qed
\end{proof}
The reason why we can use the same technique as in Theorem~3.1 of \cite{mur-dqpa} even though our model and the model used in \cite{mur-dqpa}
seems incomparable is the following. 
When the stack-top symbol is popped, it is always written in the garbage tape in our model. This makes an 
entanglement between the stack contents and the garbage tape. Sometimes, this can be an unwanted behavior and make
our model weaker than the model in \cite{mur-dqpa}. However, in our algorithm shown in the proof of Theorem~\ref{theorem:qpag_p1},
the contents of the garbage tape at the moment of reading 
the right-endmarker can be 
the same between the two sub-automata. Therefore, the stack contents and the garbage tape are separable at the moment 
of reading the right-endmarker, which
causes no problem when using the same technique in Theorem~3.1 of \cite{mur-dqpa}.

In the following, we show that no deterministic pushdown automata can solve Problem I.
\begin{theorem}
\label{theorem:no_dpa}
No deterministic pushdown automata can solve Problem I.
\qed
\end{theorem}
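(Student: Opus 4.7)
The plan is to assume a DPDA $M$ solves Problem I and derive a contradiction via a pigeonhole on a refined notion of \emph{full state}, in the spirit of the time-evolution analysis of \cite{ama-pumping}.

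First, I would examine the configurations $C(w_1,w_2)$ reached by $M$ after reading the prefix $w_1\# w_2\#$, and observe that for $M$ to decide Problem I correctly, two prefixes that lead $M$ to the same continuation behavior on every promise-valid $w_3\dol$ must in fact agree on both $w_1$ and on the bit $w_1\evend w_2^R$. Indeed, if $w_1\neq w_1'$, pick any position $j$ with $w_1(j)\neq w_1'(j)$, and compare the two suffixes $w_3=w_1^R$ and $w_3'=w_1^R$ with the single letter at position $n+1-j$ replaced by $d$. A short Hamming-parity calculation using the extra symbol $d\notin\{a,b,c\}$ shows that $(w_1\evend w_3^R)\oplus(w_1'\evend w_3^R)$ and $(w_1\evend w_3'^R)\oplus(w_1'\evend w_3'^R)$ take complementary values, so at least one of $w_3,w_3'$ makes the correct outputs on $w_1\# w_2\# \cdot$ and $w_1'\# w_2'\# \cdot$ differ for any value of $(w_1\evend w_2^R)\oplus(w_1'\evend w_2'^R)$. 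The case $w_1=w_1'$ with $w_1\evend w_2^R\neq w_1\evend w_2'^R$ is distinguished by every $w_3$. Hence the set of prefixes partitions into at least $2\cdot 3^n$ behavioral equivalence classes that $M$ must realize by genuinely distinct configurations.

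Second, I would introduce a refinement of the full state of \cite{ama-pumping} tailored to the bounded-length residual computation. Normalizing $M$ to a linear-time form, during the remaining $|w_3\dol|=n+1$ input symbols $M$ can access at most $O(n)$ symbols from the top of the stack, so I define the full state of $C(w_1,w_2)=(q,\alpha)$ as $q$ together with the ``visible slice'' of $\alpha$ decorated by the prospective pop-depth profile that $M$ will follow once $w_3$ arrives. I would then argue structurally that because every correct computation of $w_1\evend w_2^R$ must consume the copy of $w_1$ that was pushed during the reading of $w_1$ (the LIFO stack forces each symbol of $w_1$ to be popped in order to be compared with the matching symbol of $w_2$), the visible slice after $w_1\# w_2\#$ cannot faithfully encode $w_1$. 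This structural constraint bounds the number of distinct full states reachable from prefixes of the form $w_1\# w_2\#$ by a function strictly smaller than $2\cdot 3^n$ for sufficiently large $n$.

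Third, the pigeonhole principle then produces two prefixes $w_1\# w_2\#$ and $w_1'\# w_2'\#$ that lie in different behavioral equivalence classes yet share a full state. By the first step, a suitable $w_3\in\{a,b,c,d\}^n$ makes exactly one of $w_1\# w_2\# w_3$, $w_1'\# w_2'\# w_3$ a Yes-instance; but identical full states imply identical subsequent time evolution on $w_3\dol$ and hence identical answers, a contradiction. The main obstacle is the second step: bounding the number of full states strictly below $2\cdot 3^n$. The naive bound $|Q|\cdot|\Gamma|^{O(n)}$ is inadequate, so the argument must exploit the fact that a DPDA's LIFO stack cannot simultaneously retain $w_1$ and consume it for the $w_2^R$ comparison. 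Making this intuition rigorous -- which is the technical content of the modified full-state machinery, replacing the unavailable pumping-lemma approach for promise problems -- is the heart of the proof.
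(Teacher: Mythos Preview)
Your first step is correct, and the distinguishing argument with $w_3=w_1^R$ and the $d$-substituted $w_3'$ is clean. The proposal breaks down in the second step, and you essentially acknowledge this yourself: the sentence ``making this intuition rigorous \dots\ is the heart of the proof'' is the admission that the proof is not actually given. The assertion that ``every correct computation of $w_1\evend w_2^R$ must consume the copy of $w_1$ that was pushed'' presupposes that the DPDA follows a particular strategy (push $w_1$ verbatim, pop to compare), and nothing in the model forces this. A priori the DPDA may encode information about $w_1$ redundantly, interleave it with other data, or compute the parity in a non-obvious way; the claim that the visible $O(n)$-slice cannot encode $w_1$ therefore requires exactly the structural lemma you do not provide. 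Since the naive bound $|Q|\cdot|\Gamma|^{O(n)}$ exceeds $2\cdot 3^n$, the pigeonhole in your third step never fires, and the argument as written does not close.

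The paper takes a very different route that sidesteps the counting difficulty. Instead of bounding the number of reachable configurations over all $(w_1,w_2)$, it exploits the extra symbol $d$ directly by taking $w_{3L}=d^k$, so that the answer depends only on the single bit $[w_1\evend w_2^R]$. It then splits $w_1=w_{1L}w_{1R}$ with $|w_{1L}|=n/c$, proves a structural lemma about \emph{zero-stack pairs} (Lemma~\ref{lemma:memoryless_substring}) limiting where the stack can dip deeply, and uses stack-height case analysis (Conditions~I and II) to produce a large set $W_2\subseteq\{a,b,c\}^n$ of size $\Omega(3^n/n^2)$ such that all $w_{1R}\,w_2\,d^k$ with $w_2\in W_2$ reach the \emph{same} state-configuration, with the stack below $\hat h-O(1)$ untouched (Lemma~\ref{lemma:W2}). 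A separate combinatorial lemma over a carefully chosen $\Theta(n)$-size family $WL$ of $w_{1L}$'s (Lemma~\ref{lemma:distinction}) then guarantees two strings $w_2,w_2'\in W_2$ with opposite parity against some $w_{1L}\in WL$. Because the portion of the stack written while reading $w_{1L}$ is identical across $WL$ (Lemma~\ref{lemma:WL}), the two inputs reach identical full configurations yet require different outputs. This is a constructive collision argument using stack-height tracking, not a global count of full states, and that is precisely what replaces the missing step in your plan.
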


We introduce several lemmas in order to prove Theorem~\ref{theorem:no_dpa}.
We divide $w_1$ into two segments $w_1 =w_{1L}w_{1R}$.
Similarly, we divide $w_2$ and $w_3$ as $w_2 =w_{2L}w_{2R}$ and $w_3 =w_{3L}w_{3R}$, respectively.
In the following discussion, we assume that there exists a deterministic pushdown 
automaton that solves Problem I.
Let $h_{max}(k)$ be the maximum height
of the stack over all $w_1$'s  at the moment of reading the $k$-th symbol of $w_1$. 
Note that stack height can increase at most $O(1)$ when reading each symbol\footnote{Note that, on the other hand,
stack height may decrease more than $\omega(1)$ when reading each
input symbol.}. 
Then, it is obvious that
there is a constant, $c$, for which the following holds:
\[
h_{max}(\frac{n}{c}) < \log_{|\Gamma|}\left( 3^{\frac{c-1}{c}n}/(\#states\cdot n(n+1))\right), 
\]
 where $\#states$ denotes the number of states of the
finite state control, and $n=|w_1|$.
We fix such a constant $c$, and also fix the length of $w_{1L}$ to be $n/c$.

We say that pushdown automaton $M$  is in a {\it state-configuration} of $(q, a)$ if
$M$ is in the state $q$ and the stack-top symbol is $a$. In other words, a state-configuration 
is a configuration of a pushdown automaton ignoring the position of the tape head and the stack contents
except for the stack-top. The notion of a state-configuration is a modification of the notion of {\it full state}
in \cite{ama-pumping}. 
Note that the tape head can be stationary at a transition. Thus, the stack height can increase
(or decrease) multiple times with multiple transitions during reading one symbol.
Let $h_b(i)$ and $c_b(i)$ denote the stack height and the state-configuration, respectively, 
immediately before reading the $i$-th symbol of the input.
Also let $h_{a}(i)$ denote the set of stack heights that consists of the stack height after reading the $i$-th
symbol and the stack heights during reading the $(i+1)$-th symbol with the
tape head being stationary on the $(i+1)$-th symbol. 
For each $h\in h_{a}(i)$, let $c_{a}(i, h)$ be the corresponding state-configuration.
We define the notations ``$h_b(i)>h_{a}(j)$'' and ``$h_b(i)-h_{a}(j)$'' as follows:
$h_b(i)>h_{a}(j)$ iff  $h_b(i)>min_{h'\in h_{a}(j)}h'$.
$h_b(i)-h_{a}(j)=h_b(i) - min_{h'\in h_{a}(j)}h'$.
A {\it zero-stack} pair is a pair $(l,r)$ $(1\leq l<r\leq n)$ such that
$h_b(l)\in h_{a}(r)$ and $h_b(l) \not> h_{a}(t)$ for any $t$ ($l<t<r$).
Then, we have the following lemma.

\begin{lemma}
\label{lemma:memoryless_substring}
We fix $w_1$ arbitrarily. Let  $(i,j)$ be a zero-stack pair
such that the maximum of $h_b(k)-h_{a}(l)$ for $i<k<l<j$ is $\omega(1)$.
Then, for any zero-stack pair $(i', j')$  ($1\leq i'< j' < i$ or $j < i'<j'\leq n$), 
the maximum of $h_b(k)-h_{a}(l)$ for $i'<k<l<j'$ is $O(1)$.
\end{lemma}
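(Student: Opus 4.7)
The plan is a proof by contradiction: assume both $(i,j)$ and a disjoint pair $(i',j')$ are heavy, WLOG with $j<i'$. The goal is to construct an alternative input $w_1'$ of the same length $n$ that agrees with $w_1$ outside $(i,j)\cup(i',j')$ and for which the DPDA reaches the \emph{same} configuration at every position after $j'$; a suitable choice of $w_2,w_3$ will then force the DPDA to give a single answer to two inputs whose Problem~I answers differ, giving the contradiction.

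Inside a heavy pair the stack climbs by $\omega(1)$ above its baseline and then returns. Since transitions change the height by $O(1)$ and there are only $|Q|\cdot|\Gamma|$ state-configurations, pigeonholing the state-configuration along the ascent and matching it on the descent yields a pumpable factor structure as in \cite{ama-pumping}: substrings inside the pair whose simultaneous iteration or excision leaves the DPDA's state and stack unchanged at the end of the pair, thanks to matching state-configurations at their boundaries. The zero-stack property keeps these modifications confined strictly above the baseline of the pair.

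To respect $|w_1|=n$, one combines pumping inside $(i,j)$ with reverse pumping inside $(i',j')$, choosing multiplicities so that the length changes cancel. Heaviness of both pairs guarantees $\omega(1)$ candidate matching moments inside each, giving enough flexibility to pick pumpable factors of compatible lengths. This is the step where the second heavy pair is essential: with only one heavy pair one can only produce length-changing modifications, which Problem~I forbids. The resulting $w_1'$ agrees with $w_1$ outside the two pairs and produces the same DPDA configuration at every position past $j'$, so for every $w_2,w_3$ the DPDA returns the same verdict on $w_1\# w_2\# w_3$ and on $w_1'\# w_2\# w_3$.

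To finish, let $S\subseteq (i,j)\cup(i',j')$ be the nonempty set of positions where $w_1$ and $w_1'$ differ. At each $s\in S$, setting $w_2^R(s)$ and $w_3^R(s)$ to $w_1(s)$, $w_1'(s)$, or a neutral symbol (for $w_3^R$ the letter $d$ serves as neutral, since $w_1,w_1'\in\{a,b,c\}^n$) shifts the relevant distinction count between $w_1$ and $w_1'$ by $\pm 1$ or $0$. One can therefore arrange for exactly one of the parities of $w_1\evend w_2^R$ and $w_1\evend w_3^R$ to flip when passing from $w_1$ to $w_1'$, flipping the XOR value that defines Problem~I; but the DPDA is forced to agree on the two inputs, the contradiction. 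The principal obstacle is the length-preserving simultaneous pumping across two disjoint pairs together with the verification that the full stack contents---not only the top symbol---at each position past $j'$ truly coincide between $w_1$ and $w_1'$; both rely on the zero-stack property combined with the $\omega(1)$ supply of matching state-configurations inside each heavy pair.
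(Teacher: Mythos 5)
Your overall strategy (assume two disjoint heavy zero-stack pairs, extract pumpable factors from each via a pigeonhole on state-configurations, and then exhibit two equal-length inputs that the DPDA cannot distinguish but whose Problem~I answers can be made to differ by a suitable $w_2,w_3$) is the right one, and your final parity-adjustment step using a neutral choice of $w_2^R(s)$ at the differing positions is sound, indeed more explicit than what the paper writes. However, the central construction has a genuine gap: you insist on producing a \emph{single} modified string $w_1'$ of the \emph{same} length $n$ as $w_1$, by pumping up inside one heavy pair and ``reverse pumping'' (excising) inside the other so that ``the length changes cancel.'' An excision can remove only exactly $|v_2y_2|$ symbols (the pumping structure gives $u v^i x y^i z$ for $i\ge 0$, so deletion is the single case $i=0$), whereas pumping up adds a multiple of $|v_1y_1|$; exact cancellation therefore requires a divisibility relation such as $p\,|v_1y_1| = |v_2y_2|$, which nothing guarantees. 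The ``$\omega(1)$ candidate matching moments'' inside each heavy pair give you several choices of pumpable factors, but no control over their lengths, so you cannot conclude that factors ``of compatible lengths'' exist. As stated, the length-preserving step --- which you yourself identify as the crux --- does not go through.

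The paper sidesteps this entirely: it never compares a modified string against the original $w_1$. Instead it pumps each heavy pair \emph{upward} in two separate copies, obtaining $u_1v_1^{p}x_1y_1^{p}z_1$ and $u_2v_2^{q}x_2y_2^{q}z_2$, and matches their lengths to \emph{each other}, which only needs $(p-1)|v_1y_1| = (q-1)|v_2y_2|$ and is always solvable (e.g.\ $p-1=|v_2y_2|$, $q-1=|v_1y_1|$). Both strings reach the same state and stack as the original $w_1$ does, so any common completion yields the same verdict, and a completion separating the two (now equal-length, distinct) strings gives the contradiction. If you replace your ``pump up in one pair, excise in the other'' step by this two-copies-pumped-up-to-a-common-length device, the rest of your argument (same configuration past the modified regions, then the parity-flipping choice of $w_2$ with $w_3=d^n$) can be retained essentially unchanged.
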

\begin{proof}
We consider  a zero-stack pair  $(i,j)$ $(1\leq i<j\leq n)$ such that
the maximum of $h_b(k)-h_{a}(l)$ for $i<k<l<j$ is $\omega(1)$.
Let the maximum (resp. minimum) height of the stack during processing
from the $i$-th symbol to the $j$-th symbol be $h_{max}$ (resp. $h_{min}$).
Note that $h_{max}-h_{min}>\omega(1)$.
For each $h\in \{h_{min},\ldots, h_{max}\}$, 
let $ZS_h$ be the set of zero-stack pairs such that 
$ZS_h=\{ (l,r) |  h_b(l)=h, i\leq l < r \leq j \}$. Note that for at least a constant fraction of 
$\{h_{min},\ldots, h_{max}\}$, $ZS_h$ is nonempty.
For each $h\in \{h_{min},\ldots, h_{max}\}$, 
we choose at most one $(l_h,r_h)\in ZS_h$ such that $l_h < l_{h+1}$ and $r_{h+1}\leq r_h$.
It is obvious that we can have such $(l_h, r_h)$'s for at least a constant fraction of
$\{h_{min},\ldots, h_{max}\}$.
Let $(c_{a}(k, h), t)$ be a pair of a state-configuration and an input symbol where
$t\in \Sigma$ is the input symbol pointed by the tape head 
at the moment when the automaton is in the state-configuration $c_{a}(k, h)$ with the $k$ and $h$.
Then, there exists two distinct zero-stack pairs $(l_{h_1}, r_{h_1})$ and 
$(l_{h_2}, r_{h_2})$  $(h_1<h_2)$ such that $c_b(l_{h_1}) = c_b(l_{h_2})$ and 
$(c_{a}(r_{h_1}, h_1), t) = (c_{a}(r_{h_2}, h_2), t)$ for some $t$ 
since $|\Sigma|$ and the number of possible state-configurations are both $O(1)$ 
while we have $\omega(1)$ pairs of $(l_h, r_h)$'s.
We divide $w_1$ as $w_1=uvxyz$ where $u=w_1(1)\cdots w_1(l_{h_1}-1)$, 
$v=w_1(l_{h_1})$ $\cdots w_1(l_{h_2}-1)$, 
$x=w_1(l_{h_2})\cdots w_1(r_{h_2})$, 
$y=w_1(r_{h_2}+1)\cdots w_1(r_{h_1})$, and 
$z=w_1(r_{h_1}+1)\cdots w_1(n)$, where $w_1(i)$ denotes the $i$-th symbol of $w_1$.
Then, for any $i\geq 0$, the configuration after reading $uv^ixy^iz$ and the configuration after reading $uvxyz$
are  the same, including the contents of the stack.

We assume that there exists two zero-stack pairs  $(i_1,j_1)$ and $(i_2, j_2)$ 
$(1\leq i_1<j_1<i_2<j_2\leq n)$ such that
the maximum of $h_b(k)-h_{a}(l)$ for $i_1<k<l<j_1$
and the maximum for $i_2<k<l<j_2$
 are both $\omega(1)$.
Then, we can divide $w_1$ in two ways: $w_1=u_k v_k x_k y_k z_k$ with $(i_k,j_k)$ $(k\in\{1,2\})$.
It is obvious that there exist $p$ and $q$ such that $|u_1 v_1^p x_1 y_1^p z_1|
=|u_2 v_2^q x_2 y_2^q z_2|$. Thus, there exist two inputs $u_1 v_1^p x_1 y_1^p z_1$ and
$u_2 v_2^q x_2 y_2^q z_2$ for which the configurations after reading the two inputs
are the same, including the contents of the stack. 
This implies that for any completion of the inputs, both of
$u_1 v_1^p x_1 y_1^p z_1$ and $u_2 v_2^q x_2 y_2^q z_2$ leads to the same answer,
which is a contradiction.
\qed
\end{proof}

Let $w_{pref}$ be a string for which there is a zero-stack pair $(i,j)$ and
the maximum of $h_b(k)-h_{a}(l)$ for $i<k<l<j$ is $\omega(1)$ where $|w_{pref}| = c|w_{1L}|$
for some constant $c$ ($0<c<1$).
If there is no such zero-stack pair for any long enough $w_{pref}$, we define $w_{pref}$ to be an empty string.
We fix such a $w_{pref}$.
We define $a^+=b, b^+=c, c^+=a$.
For two strings $u, v$ $\in \{a, b, c\}^n$, we say $u\leq v$ iff 
$[(u(k)=x) \longrightarrow (v(k)=x$ or $v(k)=x^+)]$, where $x\in\{a,b,c\}$ and $u(k)$ (resp. $v(k)$) represents the $k$-th
symbol of $u$ (resp. $v$).
Let $WL_{all}$ be the set of $w_{1L}$'s such that 
$WL_{all}=\{w_{pref} a^{|w_{1L}|-|w_{pref}|-k} b^k | 0\leq k \leq |w_{1L}|-|w_{pref}|-1\}$
 ($= \{w_{pref}aaa\ldots aaa$, $w_{pref}aaa\ldots aab$, $w_{pref}aaa\ldots abb,$ 
 $w_{pref}aaa\ldots bbb, \ldots, w_{pref}abb\ldots bbb\}$). 
Note that for any two distinct $u, v\in WL$, $u\leq v$ or $v\leq u$.
Then, we have the following lemma.
\begin{lemma}
\label{lemma:WL}
There exists $WL\subseteq WL_{all}$ satisfying the following conditions:
(1) Any $w\in WL$ leads to the same state-configuration, say $C_{WL}$. 
(2) Given a constant $c$, after reading $w$, the stack contents between 
the top and the $c$-th from the top are the same among all $w\in WL$. 
(3) $|WL| = \Theta(n)$. \qed
\end{lemma}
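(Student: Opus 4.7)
The plan is to extract $WL$ from $WL_{all}$ by a single pigeonhole step over an $O(1)$-sized family of signatures. By construction of $w_{pref}$ (whose length is a constant fraction $c'$ of $|w_{1L}|$ with $0<c'<1$), we already have $|WL_{all}| = |w_{1L}| - |w_{pref}| = \Theta(n)$, so it suffices to partition $WL_{all}$ into a bounded number of equivalence classes and retain the largest one.

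To each $w \in WL_{all}$ I would associate a signature consisting of (i) the state-configuration $(q,a)$ reached by the deterministic pushdown automaton $M$ immediately after consuming $w$, and (ii) the ordered tuple of the topmost $c$ stack symbols at that moment, where $c$ is the constant from the statement (padded with a fixed sentinel symbol if the stack happens to have fewer than $c$ symbols, so that the tuple is always well-defined). Since $|Q|$ and $|\Gamma|$ are finite and $c$ is a fixed constant, the total number of distinct signatures is at most $|Q|\cdot|\Gamma|^{c+1} = O(1)$. By the pigeonhole principle, some signature class must contain $\Omega(n)$ elements of $WL_{all}$, and I let $WL$ be this class. Condition (1) then holds with $C_{WL}$ being the common state-configuration of the class; condition (2) holds because the topmost $c$ stack symbols are fixed by the signature; condition (3) follows from $\Omega(n) \leq |WL| \leq |WL_{all}| = O(n)$.

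The only subtlety worth checking is that ``the state-configuration after reading $w$'' is well-defined. Since $M$ is deterministic, we can take the configuration reached just before the next input symbol would be consumed, which collapses any intervening $\varepsilon$-moves into a single end-state (if $M$ fails to exit such an $\varepsilon$-loop at all, then $M$ never halts on any longer input, contradicting our assumption that $M$ solves Problem~I). I do not foresee a genuine obstacle here; the lemma is essentially a clean pigeonhole whose role is to prepare the combinatorial setting (a $\Theta(n)$-sized family of $w_{1L}$ prefixes sharing both finite control state and the top portion of the stack) for the subsequent steps of the impossibility proof of Theorem~\ref{theorem:no_dpa}.
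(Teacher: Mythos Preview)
Your proposal is correct and essentially identical to the paper's own proof: both argue that the pair (state-configuration, top-$c$ stack symbols) takes only $O(1)$ values, so pigeonhole on $WL_{all}$ (of size $|w_{1L}|-|w_{pref}|=\Theta(n)$) yields a $\Theta(n)$-sized class $WL$. Your write-up is somewhat more careful (sentinel padding, the $\varepsilon$-loop caveat), but there is no methodological difference.
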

\begin{proof}
There exists  a constant fraction of $WL_{all}$, which is $WL$, 
satisfying the first and the second conditions of
the lemma since the number of possible state-configurations is a constant 
and the number of possible stack contents 
between the top and the $c$-th
from the top is also a constant. 
It is obvious that $|WL|=\Theta(n)$ since $|WL_{all}| = |w_{1L}|-|w_{pref}| = \Theta(n)$.
\qed
\end{proof}

We consider the case that the following Condition I holds:

\begin{condition}
There exists $w_{1L}\in WL$ and $w_{2L}$ such that for at least $1/(n+1)$ fraction of $\{w_{1R}\}$, stack height is less than
$\log_{|\Gamma|}(3^{n-|w_{1L}|}/(\#states\cdot n(n+1)))$ at the moment of reading the last symbol of
$w_{1L}w_{1R}w_{2L}$.
\qed
\end{condition}

In this case, at the moment when stack height is less than  $\log_{|\Gamma|}(3^{n-|w_{1L}|}/$ $(\#states\cdot n(n+1)))$,
the number of possible configurations (including stack contents and the position
of the input tape head) is less than $\frac{1}{n+1}3^{n-|w_{1L}|}$, which means
there exist at least two distinct partial inputs $w_{1L}w_{1R} w_{2L}$ and 
$w_{1L}w_{1R}' w_{2L}$ that result in the same configuration (including stack contents and the position
of an input tape head) since $|\{w_{1R}\}| = 3^{n-|w_{1L}|}$. Thus both of
$w_{1L}w_{1R} w_{2L}$ and $w_{1L}w_{1R}' w_{2L}$
 lead to the same answer for any
completion of the rest of the input. This is a contradiction.
Thus, we can say the negation of Condition I holds. In this case, given $w_2$,
at every step of processing $w_2$, for at most $1/(n+1)$ fraction of $\{w_{1R}\}$, stack height becomes less than
$\log_{|\Gamma|}(3^{n-|w_{1L}|}/(\#states\cdot n(n+1)))$.
Thus, for at most $n/(n+1)$ fraction of $\{w_{1R}\}$, stack height becomes less than
$\log_{|\Gamma|}(3^{n-|w_{1L}|}/(\#states\cdot n(n+1)))$ while processing $w_2$; for at least $1/(n+1)$
fraction of $\{w_{1R}\}$, stack height is always more than or equal to
$\log_{|\Gamma|}(3^{n-|w_{1L}|}/(\#states\cdot n(n+1)))$ while processing $w_2$.
We consider the case that the following Condition II holds:

\begin{condition}
For any $w_{1L}\in WL$ and $w_2$, at least $1/(n+1)$ fraction of $\{w_{1R}\}$, stack height is always
greater than or equal to $\log_{|\Gamma|}(3^{n-|w_{1L}|}/(\#states\cdot n(n+1)))$
 while processing $w_2$.
\qed
\end{condition}

We define $w_{3L}$  as the prefix of $w_3$ such that stack height is always 
higher than $\hat{h}-O(1)$ $(=\hat{h}')$ during reading $w_{1R} w_2 w_{3L}$ and it becomes $\hat{h}'$ when reading
the last symbol of $w_{3L}$, where $\hat{h}$ denotes the stack height after reading the 
last symbol of $w_{1L}$.
If stack height is always higher than $\hat{h}'$ during
reading $w_3$, we define $w_{3L}=w_3$.



\begin{lemma}
\label{lemma:W2}
We assume that there exists a deterministic pushdown automaton that solves Problem I.
Then, there exist $w_{1R}$, $k$ ($1\leq k \leq n$) and a set $W_2$ of $w_2$'s such that, 
starting from $C_{WL}$, $w_{1R} w_2 w_{3L}$ leads to the same state-configuration for all
$w_2\in W_2$ where $w_{3L}=d^k$, stack height is
always greater than or equal to $\hat{h}-O(1)$, and $|W_2|=\Omega(\frac{1}{n^2}3^n)$, where 
$C_{WL}$ is as in Lemma~\ref{lemma:WL}.
\end{lemma}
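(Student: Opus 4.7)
The plan is to obtain the triple $(w_{1R}, k, W_2)$ by three successive applications of the pigeonhole principle applied to the inputs guaranteed by Condition~II. Fix any $w_{1L}\in WL$ and recall that all such $w_{1L}$ lead to the common state-configuration $C_{WL}$ and share the top $O(1)$ stack symbols by Lemma~\ref{lemma:WL}.

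First I would count the pairs $(w_{1R}, w_2)$ for which, starting from $C_{WL}$, the stack height stays above the threshold $T:=\log_{|\Gamma|}(3^{n-|w_{1L}|}/(\#states\cdot n(n+1)))$ throughout processing $w_2$. Condition~II provides at least $3^n\cdot 3^{n-|w_{1L}|}/(n+1)$ such pairs, so averaging over $w_{1R}$ yields a particular $w_{1R}^{\ast}$ together with a set $A$ of at least $3^n/(n+1)$ many $w_2$'s on which the stack stays $\geq T$ during $w_2$. By the very inequality that defines the constant $c$, one has $T>\hat h$, so in particular the stack height is $\geq \hat h$ during reading $w_2$ for every $w_2\in A$.

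Next, for each $w_2\in A$ I would take the completion $w_3=d^n$ and let $k(w_2)\in\{1,\dots,n\}$ be the first position in $d^n$ at which the stack height reaches $\hat h' = \hat h - O(1)$ (or $n$ if it never does). By the definition of $w_{3L}$ in the excerpt, this makes $w_{3L}=d^{k(w_2)}$ and guarantees $\ge\hat h'$ throughout $w_{3L}$. Pigeonholing on the $n$ values of $k(w_2)$ extracts a common $k^{\ast}$ realized by a subset $B\subseteq A$ of size $\Omega(3^n/n^2)$. Since the state-configuration (state $\times$ stack-top) after reading $w_{1R}^{\ast}w_2 d^{k^{\ast}}$ takes only $O(1)$ values, a final pigeonhole produces the required $W_2\subseteq B$ of the same asymptotic size on which this state-configuration is constant.

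The step I expect to be the main obstacle is verifying the invariant ``stack height $\geq \hat h - O(1)$'' on the $w_{1R}^{\ast}$ segment, which is controlled neither by Condition~II nor by the definition of $w_{3L}$. To close this gap I would invoke Lemma~\ref{lemma:memoryless_substring}: since a zero-stack pair of $\omega(1)$ height difference lives inside $w_{pref}\subseteq w_{1L}$, any zero-stack pair formed inside $w_{1R}^{\ast}$ has only $O(1)$ height difference, which prevents the stack from dipping by more than $O(1)$ below $\hat h$ and rebounding within $w_{1R}^{\ast}$. Combined with Lemma~\ref{lemma:WL}(2), applied with a large enough constant, this also ensures that the portion of the stack ever consulted while processing $w_{1R}^{\ast}w_2 w_{3L}$ lies entirely within the common top segment shared by all $w_{1L}\in WL$, so the resulting state-configuration is an intrinsic property of $(w_{1R}^{\ast},w_2)$, as the subsequent arguments will require.
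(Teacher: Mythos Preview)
Your proposal is correct and follows essentially the same route as the paper's own proof. The paper's argument is extremely terse---it applies Condition~II, averages over $w_{1R}$ to obtain a set of $\Omega(\frac{1}{n}3^n)$ many $w_2$'s, and then simply says ``By Lemma~\ref{lemma:memoryless_substring} and the fact that $\hat{h} < \log_{|\Gamma|}(3^{|w_{1R}|}/(\#states\cdot n(n+1)))$, the lemma follows immediately.'' Your write-up unpacks that last sentence: the additional pigeonhole on the $n$ possible values of $k$ (accounting for the extra $1/n$ in the bound $\Omega(\frac{1}{n^2}3^n)$), the final pigeonhole on the $O(1)$ state-configurations, and the explicit use of Lemma~\ref{lemma:memoryless_substring} together with Lemma~\ref{lemma:WL}(2) to control the $w_{1R}$ segment are exactly the details the paper suppresses.
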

\begin{proof}
Note that for each $w_2$, there are more than $\frac{1}{n+1}3^{|w_{1R}|}$ of $w_{1R}$'s for which 
stack height is always
greater than or equal to $\log_{|\Gamma|}(3^{n-|w_{1L}|}/(\#states\cdot n(n+1)))$
 while processing $w_2$ by Condition II. This means that for some $w_{1R}$, 
 there are $\Omega(\frac{1}{n}3^n)$ of $w_2$'s  for which 
 stack height is always
greater than or equal to $\log_{|\Gamma|}(3^{n-|w_{1L}|}/(\#states\cdot n(n+1)))$
 while processing $w_2$. By Lemma~\ref{lemma:memoryless_substring} and the fact that 
 $\hat{h} < \log_{|\Gamma|}( 3^{|w_{1R}|}/(\#states\cdot n(n+1)))$, 
 the lemma follows immediately.
\qed
\end{proof}

We fix $w_{1R}$, $k$ and $W_2$ as those in Lemma~\ref{lemma:W2} in the following.
For $WL$ in Lemma~\ref{lemma:WL}, the following lemma holds.
\begin{lemma}
\label{lemma:distinction}
We assume that there exists a deterministic pushdown automaton that solves Problem I.
For $w_{1R}$, $k$ and $W_2$ in Lemma~\ref{lemma:W2}, 
there exist $w_{1L}\in WL$ and two distinct partial inputs 
$w_{1L}w_{1R} w_2 w_{3L}$ and $w_{1L}w_{1R} w_2' w_{3L}$ ($w_2, w_2'\in W_2$  and $w_{3L}=d^k$) 
 such that
$w_1\evend w_2^R$ and $w_1\oddd w_2'^R$.
\end{lemma}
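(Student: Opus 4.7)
The plan is to argue by contradiction using a counting argument on $W_2$. Suppose that no such $w_{1L}$, $w_2$, $w_2'$ exist; equivalently, for every $w_{1L} \in WL$ there is a fixed parity $\pi(w_{1L}) \in \{0, 1\}$ such that, writing $w_1 = w_{1L} w_{1R}$, the indicator of $w_1 \evend w_2^R$ equals $\pi(w_{1L})$ for every $w_2 \in W_2$. I will derive from this hypothesis a bound $|W_2| \leq 3^n \cdot (2/3)^{\Theta(n)}$, contradicting $|W_2| = \Omega(3^n / n^2)$ from Lemma~\ref{lemma:W2}.

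First I would enumerate $WL$ as $u_{k_1}, \ldots, u_{k_t}$ with $k_1 < \cdots < k_t$, using $u_k = w_{pref} a^{m-k} b^k$ and $m = |w_{1L}| - |w_{pref}|$. For consecutive $k_i < k_{i+1}$, the strings $u_{k_i}$ and $u_{k_{i+1}}$ differ exactly on the contiguous block $D_i$ of $L_i := k_{i+1} - k_i$ positions, where $u_{k_i}$ holds $a$ and $u_{k_{i+1}}$ holds $b$. For each position $j \in D_i$, the distinction bit with $w_2^R(j)$ flips between the two strings iff $w_2^R(j) \in \{a, b\}$, whereas $w_2^R(j) = c$ produces no flip. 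Hence for every $w_2 \in W_2$ the parity of $|\{j \in D_i : w_2^R(j) \in \{a, b\}\}|$ must equal $\pi(u_{k_i}) \oplus \pi(u_{k_{i+1}})$, a constant independent of $w_2$.

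Next I would count the $v \in \{a, b, c\}^n$ that satisfy all $t-1$ of these block parity constraints. The blocks $D_1, \ldots, D_{t-1}$ are pairwise disjoint (they tile a consecutive range of positions inside $w_{1L}$), so the constraints are independent. A generating-function calculation on $(1 + 2x)^L$ shows that among $v \in \{a, b, c\}^L$, the number with a prescribed parity of $|\{j : v(j) \in \{a,b\}\}|$ is $(3^L \pm (-1)^L)/2 \leq (3^L + 1)/2$. Multiplying over all blocks and filling the remaining positions freely gives
\[
|\{v \in \{a,b,c\}^n : v \text{ satisfies every constraint}\}| \leq 3^n \prod_{i=1}^{t-1} \frac{3^{L_i} + 1}{2 \cdot 3^{L_i}} \leq 3^n \cdot (2/3)^{t-1}.
\]

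Since $|WL| = \Theta(n)$ by Lemma~\ref{lemma:WL}, we have $t - 1 = \Theta(n)$, so the right-hand side is exponentially smaller than $\Omega(3^n / n^2)$. Yet the set on the left contains $\{w_2^R : w_2 \in W_2\}$, which has cardinality $|W_2| = \Omega(3^n / n^2)$ by Lemma~\ref{lemma:W2}. This contradiction produces some $w_{1L} \in WL$ admitting $w_2, w_2' \in W_2$ of opposite parities; since different parities force $w_2 \neq w_2'$, the lemma follows. I expect the main obstacle to be cleanly justifying that the block parity constraints combine multiplicatively; this is handled by noting that the $D_i$ live on disjoint position sets and each constraint depends only on $v$ restricted to its own block, so the total count factors across blocks as claimed.
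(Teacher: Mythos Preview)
Your argument is correct and follows the same overall strategy as the paper: assume every $w_{1L}\in WL$ forces a single parity on $W_2$, and show that the resulting stack of $\Theta(n)$ parity constraints cuts the admissible $w_2$'s down to an exponentially small set, contradicting $|W_2|=\Omega(3^n/n^2)$.

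The execution differs slightly. The paper orders $WL=\{w_{1L}^1,\dots,w_{1L}^m\}$, sets $W_{2,\text{even}}^i=\{w_2: w_{1L}^jw_{1R}\evend w_2^R\text{ for all }j\le i\}$, and proves $|W_{2,\text{even}}^i|\le c\,|W_{2,\text{even}}^{i-1}|$ for a constant $c<1$ via a pairing trick (swap $b\leftrightarrow c$ at the last relevant position of $D^i$); the odd chain is handled symmetrically, and the general mixed-parity case is left implicit. Your version is a bit cleaner: by passing to the parity \emph{differences} $\pi(u_{k_i})\oplus\pi(u_{k_{i+1}})$ you obtain $t-1$ constraints that live on pairwise disjoint blocks $D_i$, so the count factors exactly and the generating-function identity $(3^L\pm(-1)^L)/2$ gives the explicit bound $3^n(2/3)^{t-1}$ in one stroke, with no need to treat even and odd chains separately or to justify the mixed-parity case. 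Both approaches rely on the same structural fact (the $D_i$ are fresh positions at each step because the $u_k$ are nested $a$--$b$ strings), so the difference is really one of bookkeeping rather than of idea.
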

\begin{proof}
Let $WL=\{w_{1L}^1, w_{1L}^2, \ldots, w_{1L}^m\}$ where $w_{1L}^{i}\leq w_{1L}^{i+1}$.
$W^1_{2,even}$ denotes the set of $w_2\in \{a,b,c\}^n$ such that $w_{1L}^1w_{1R} \evend w_2^R$. 
Also $W^2_{2,even}$ denotes the set of $w_2\in W_2^1$ such that 
$w_{1L}^2w_{1R} \evend w_2^R$.
Similarly, $W^i_{2,even}$ denotes the set of $w_2\in W_{2,even}^{i-1}$ such that 
$w_{1L}^iw_{1R} \evend w_2^R$.
In other words, for all $w_2\in W_{2,even}^i$ and $j\leq i$, $w_{1L}^jw_{1R} \evend w_2^R$.
We show that $|W_{2,even}^i| \leq c |W_{2,even}^{i-1}|$ for some constant $c<1$ in the following.
We consider the positions at which $w_{1L}^{i-1}$ and $w_{1L}^i$ differ. 
We define the set of such positions to be $D^i$. Note that $w_{1L}^{i-1}(k) = a$ and $w_{1L}^{i}(k) = b$
for $k\in D^i$, where $w(k)$ represents the $k$-th symbol of $w$.
We define $S=\{ w_2\in W_{2,even}^{i-1} | \mbox{$\exists k\in D^i$ $w_2^R(k) = b$ or $w_2^R(k) = c$.}   \}$,
where $w_2(k)$ denotes the $k$-th symbol of $w_2$.
It is obvious that $|S| \geq c_1|W_{2,even}^{i-1}|$ for some constant $c_1<1$.
For $w_2 \in S$, let $l$ be the largest position in $D^i$ such that $w_2^R(l)=b$ or $w_2^R(l)=c$.
We assume that $w_2^R(l) = b$ without loss of generality. We consider $w_2'$ such that
$w_2'^R(i) = w_2^R(i)$ for $i\neq l$ and $w_2'^R(i) = c$. It is obvious that 
$w_2'$ is also in $S$. Then, either $w_{1L}^iw_{1R} \oddd w_2^R$
or $w_{1L}^iw_{1R} \oddd w_2'^R$ holds. This implies that a half of elements in $S$ cannot belong to
$W_{2,even}^{i}$. Thus, $|W_{2,even}^{i}| \leq |W_{2,even}^{i-1}| - \frac{c_1}{2}|W_{2,even}^{i-1}|
=c_2|W_{2,even}^{i-1}|$, where $c_2 = 1-\frac{c_1}{2}$.
%
%
Similar to $W^1_{2,even}$, we define $W^1_{2,odd}$, and then, similarly, it can be shown
that $|W_{2,odd}^i| \leq c |W_{2,odd}^{i-1}|$ for some constant $c<1$.
Therefore, $|W_{2,even}^i|$ and  $|W_{2,odd}^i|$ can be smaller 
than $|W_2|$ for $i\in \Theta(n)$. The lemma follows.
\qed
\end{proof}

{\noindent \bf  (Proof of Theorem~\ref{theorem:no_dpa})}

We assume that there exists a classical deterministic pushdown automaton
that solves Problem I. Then, by Lemma~\ref{lemma:distinction}, 
we have two input strings, 
$w_a=w_{1L}w_{1R} w_2 w_{3L}$ and $w_b=w_{1L}w_{1R} w_2' w_{3L}$ ($w_2, w_2'\in W_2$  and $w_{3L}=d^k$),
such that
$w_1 \oddd w_2^R$ and $w_1 \evend w_2'^R$.
 We fix $w_{3R}=d^{n-k}$. Then, the answer only depends on the number of distinctions between
 $w_1$ and $w_2^R$ (or $w_2'^R$). Thus, one is YES and the other is NO for $w_a$ and $w_b$.
 However, the configurations (including the contents of the stack and the position
 of the input tape head) at the moment of reading the last symbol of $w_{3L}$ are the
 same between $w_a$ and $w_b$ if $k\neq n$. On the other hand,
 if $k=n$, the state-configuration at the moment of reading the last symbol of
 $w_a$ and $w_b$ are the same.
 Thus, both of $w_a$ and $w_b$ lead to the same answer. This is a contradiction.
 \qed






\section{Comparison between Quantum Pushdown Automata with and without a Garbage Tape -- Concluding Remarks}
In this paper, we showed that QPAGs are strictly more powerful than classical pushdown automata in the setting
of exact, one-sided error  and nondeterministic computation.
In this section, we discuss comparison between quantum pushdown automata with and without 
a garbage tape. Our conjecture is that Problem I cannot be solved exactly  by quantum pushdown automata
without a garbage tape, which is Golovkins's model\cite{gol-qpa}, since it seems to be impossible to compute
$w_1 \evend w_2^R$ or $w_1 \evend w_3^R$ without a garbage tape. On the other hand, in the QPAG model, 
popped symbols are always stored in the garbage tape. Thus, if the contents of the garbage tape are different
between two computation paths, they no longer interfere with each other. In other words, only the two computation
paths that have the same contents in the garbage tape can interfere with each other. This might make the QPAG model
less powerful than Golovkins's model. Therefore, we conjecture that the class of languages recognized by the two models
are incomparable. We also conjecture that even the generalized quantum pushdown automata without a garbage tape
constructed by the technique in \cite{yak-unbounded-error-qfa} cannot solve Problem I.
At least, the generalized model of quantum pushdown automata without a garbage tape cannot execute the algorithm in 
Theorem~\ref{theorem:qpag_p1}. This is because, although the garbage tape is in a superposition in the middle
of the computation of the algorithm, the generalized quantum pushdown automaton cannot represent such a superposition without
a garbage tape. Thus, our model and the generalized model without a garbage tape might also be incomparable.

\subsubsection*{Acknowledgments}
This work was supported  by JSPS KAKENHI Grant Nos. 24500003 and 24106009.

\bibliographystyle{splncs03}
\bibliography{ref.bib}

\begin{thebibliography}{10}
\providecommand{\url}[1]{\texttt{#1}}
\providecommand{\urlprefix}{URL }

\bibitem{abl-qbdd}
Ablayev, F., Gainutdinova, A., Khadiev, K., Yakary{\i}lmaz, A.: Very narrow
  quantum {OBDD}s and width hierarchies for classical {OBDD}s. In: Proceedings
  of 16th International Workshop on Descriptional Complexity of Formal Systems
  (DCFS'14). pp. 53--64 (2014)

\bibitem{ama-pumping}
Amarilli, A., Jeanmougin, M.: A proof of the pumping lemma for context-free
  languages through pushdown automata (2012), coRR, abs/1207.2819

\bibitem{amb-qfa-strength}
Ambainis, A., Freivalds, R.: 1-way quantum finite automata: strengths, weakness
  and generalizations. In: Proceedings of the 29th Symposium on Foundations of
  Computer Science (FOCS'98). pp. 332--341 (1998)

\bibitem{amb-2qfa_classical}
Ambainis, A., Watrous, J.: Two-way finite automata with quantum and classical
  states. Theoretical Computer Science  287(1),  299--311 (2002)

\bibitem{amb-exact-qfa}
Ambainis, A., Yakary{\i}lmaz, A.: Superiority of exact quantum automata for
  promise problems. Information Processing Letters  112(7),  289--291 (2012)

\bibitem{bon-counter}
Bonner, R., Freivalds, R., Kravtsev, M.: Quantum versus probabilistic one-way
  finite automata with counter. In: Proceedings of the 28th Conference on
  Current Trends in Theory and Practice of Informatics (SOFSEM2001). pp.
  181--190 (2001)

\bibitem{bro-qfa}
Brodsky, A., Pippenger, N.: Characterizations of 1-way quantum finite automata.
  SIAM Journal on Computing  31(5),  1456--1478 (2002)

\bibitem{cia-qfa}
Ciamarra, M.P.: Quantum reversibility and a new model of quantum automaton. In:
  Proceedings of the 13th International Symposium on Fundamentals of
  Computation Theory (FCT'01). pp. 376--379 (2001)

\bibitem{cle-quantum}
Cleve, R., Ekert, A., Macchiavello, C., Mosca, M.: Quantum algorithms
  revisited. Proceedings of the Royal Society A  454,  339--354 (1998)

\bibitem{deu-quantum}
Deutsch, D.: The {C}hurch-{T}uring principle and the universal quantum
  computer. Proceedings of the Royal Society A  400,  97--117 (1985)

\bibitem{deutsch-jozsa}
Deutsch, D., Jozsa, R.: Rapid solution of problem by quantum computation.
  Proceedings of the Royal Society A  439,  553--558 (1992)

\bibitem{gol-qpa}
Golovkins, M.: Quantum pushdown automata. In: Proceedings of 27th Conference on
  Current Trends in Theory and Practice of Informatics (SOFSEM2000). pp.
  336--346 (2000)

\bibitem{hir-qfa1}
Hirvensalo, M.: Various aspects of finite quantum automata. In: Proceedings of
  Developments in Language Theory 2008 (DLT2008). pp. 21--33 (2008)

\bibitem{hir-qfa2}
Hirvensalo, M.: Quantum automata with open time evolution. International
  Journal of Natural Computing Research (IJNCR)  1(1),  70--85 (2010)

\bibitem{kon-qfa}
Kondacs, A., Watrous, J.: On the power of quantum finite state automata. In:
  Proceedings of the 38th Symposium on Foundations of Computer Science
  (FOCS'97). pp. 66--75 (1997)

\bibitem{kra-counter}
Kravtsev, M.: Quantum finite one-counter automata. In: Proceedings of 26th
  Conference on Current Trends in Theory and Practice of Informatics
  (SOFSEM1999). pp. 432--442 (1999)

\bibitem{moo-q_automata}
Moore, C., Crutchfield, J.P.: Quantum automata and quantum grammars.
  Theoretical Computer Science  237(1--2),  275--306 (2000)

\bibitem{mur-dqpa}
Murakami, Y., Nakanishi, M., Yamashita, S., Watanabe, K.: Quantum versus
  classical pushdown automata in exact computation. IPSJ Journal  46(10),
  2471--2480 (2005)

\bibitem{naka-qcpda}
Nakanishi, M., Hamaguchi, K., Kashiwabara, T.: Expressive power of quantum
  pushdown automata with classical stack operations under the perfect-soundness
  condition. IEICE Transactions on Information and Systems  E89-D(3),
  1120--1127 (2006)

\bibitem{ogden}
Ogden, W.: A helpful result for proving inherent ambiguity. Mathematical
  Systems Theory  2(3),  191 -- 194 (1968)

\bibitem{pas-qfa}
Paschen, K.: Quantum finite automata using ancilla qubits (2000), technical
  report, University of Karlsruhe, available at
  http:{\slash}{\slash}digbib.ubka.uni-karlsruhe.de{\slash}volltexte{\slash}1452000

\bibitem{cem-qca}
Say, A.C.C., Yakary{\i}lmaz, A.: Quantum counter automata. International
  Journal of Foundations of Computer Science  23(5),  1099--1116 (2012)

\bibitem{yak-realtime_qa}
Yakary{\i}lmaz, A.: Superiority of one-way and realtime quantum machines. RAIRO
  - Theoretical Informatics and Applications  46(04),  615--641 (2012)

\bibitem{yak-write-only-2}
Yakary{\i}lmaz, A., Freivalds, R., Say, A.C.C., Agadzanyan, R.: Quantum
  computation with wirte-ony memory. Natural Computing  11(1),  81--94 (2012)

\bibitem{yak-amplification}
Yakary{\i}lmaz, A., Say, A.C.C.: Efficient probability amplification in two-way
  quantum finite automata. Theoretical Computer Science  410(20),  1932--1941
  (2009)

\bibitem{yak-succinctness}
Yakary{\i}lmaz, A., Say, A.C.C.: Succinctness of two-way probabilistic and
  quantum finite automata. Discrete Mathematics and Theoretical Computer
  Science  12(4),  19--40 (2010)

\bibitem{yak-unbounded-error-qfa}
Yakary{\i}lmaz, A., Say, A.C.C.: Unbounded-error quantum computation with small
  space bounds. Information and Computation  209(6),  873--892 (2011)

\bibitem{yam-kcounter}
Yamasaki, T., Kobayashi, H., Imai, H.: Quantum versus deterministic counter
  automata. Theoretical Computer Science  334(1--3),  275--297 (2005)

\bibitem{yam-counter}
Yamasaki, T., Kobayashi, H., Tokunaga, Y., Imai, H.: One-way probabilistic
  reversible and quantum one-counter automata. Theoretical Computer Science
  289(2),  963--976 (2002)

\end{thebibliography}

\newpage

\section*{Appendix A: Probabilistic Pushdown Automata}
\begin{definition}
  A probabilistic pushdown automaton (PPA) is defined
  as the following 7-tuple:
\[
  M=(Q, \Sigma,\Gamma, \delta, q_0,  Q_{acc}, Q_{rej}),   
\]
 where $Q$ is a set of states, $\Sigma$ is a set of input
  symbols including the 
  left and the right endmarkers $\{\nc, \dol\}$, respectively, $\Gamma$
  is a set of stack symbols including the bottom symbol $Z$, 
   $\delta$
  is a state transition 
  function ($\delta:(Q\times\Sigma\times\Gamma\times Q\times G\cup \{\varepsilon,pop\}\times \{0,1\} ) 
  \longrightarrow [0,1]$),   where $G$ ($\subseteq (\Gamma\setminus \{Z\})^+$) is a finite set and $(\Gamma\setminus\{Z\})^+$ 
  is the set of all nonempty strings of finite length from alphabet $\Gamma\setminus\{Z\}$, 
  $q_0$ is the initial 
  state,  
  $Q_{acc} (\subseteq Q)$ 
  is the set of accepting states, and $Q_{rej} (\subseteq Q)$ is
  the set of 
  rejecting states, where $Q_{acc}\cap Q_{rej}=
  \emptyset$. 
  We restrict that for all $q, q', a, D$, $\delta(q,a,Z,q',pop, D)=0$. 
  \qed
\end{definition}

 $\delta(q,a,b,q',w, D)=\alpha$ means that the probability of the transition
 from $q$ 
 to $q'$ moving the head to $D$ with stack operation $w$ is $\alpha$ when  
 reading  input symbol $a$ and  stack
 symbol $b$. Note that for each input symbol and each stack symbol, the sum
 of the weights (i.e. the probabilities) of outgoing 
 transitions of a state must be 1. 
 Computation halts when it enters the accepting or rejecting
 states.

\section*{Appendix B: Quantum Pushdown Automata with a Classical Stack}
 A quantum  pushdown automata
 with a classical 
 stack (QCPDA) has an   
 input tape to which a quantum head is attached and 
 a classical stack to which a classical stack top pointer is
 attached. 
A QCPDA also has a quantum finite state control.
 The quantum finite state control 
 reads the stack top symbol pointed by the classical stack top
 pointer and  the input symbol pointed by the quantum head. Stack
 operations are  
 determined  solely by the results of measurements of the quantum
 finite state
 control.
  We define QCPDAs 
 formally as follows. 
 \begin{definition}%
  A quantum pushdown automaton with a classical stack (QC\-PDA)
  is defined 
  as the following 8-tuple:
\[
  M=(Q, \Sigma,\Gamma, \delta, q_0, \sigma, Q_{acc}, Q_{rej}),   
\]
  where $Q$ is a set of states, $\Sigma$ is a set of input
  symbols including the  
  left and the right endmarkers $\{\nc, \dol \}$, respectively, $\Gamma$
  is a set of stack symbols including the bottom symbol $Z$, $\delta$
  is a quantum state transition 
  function ($\delta:(Q\times\Sigma\times\Gamma\times Q\times \{0,1\}) 
  \longrightarrow \bbbc$),   
  $q_0$ is the initial 
  state, 
   $\sigma$ is a function by which
  stack operations are determined ($\sigma: Q\setminus(Q_{acc}\cup Q_{rej}) \longrightarrow G\cup \{\varepsilon,pop\}$), where $G$ ($\subseteq (\Gamma\setminus\{Z\})^+$) is 
  a finite set and $(\Gamma\setminus\{Z\})^+$ is the set of all nonempty strings 
  of finite length from alphabet $\Gamma\setminus\{Z\}$,  
  $Q_{acc}$ $(\subseteq Q)$ 
  is the set of accepting states, and $Q_{rej}$ $(\subseteq Q)$ is
  the set of 
  rejecting states, where $Q_{acc}\cap Q_{rej}=
  \emptyset$. 
  We restrict that for all $q, q', a, D$, if $\sigma(q')=pop$, then $\delta(q,a,Z,q',D)=0$.
  \qed
 \end{definition}
 
 $\delta(q,a,b,q',D)=\alpha$ means that the amplitude of the transition from $q$ 
 to $q'$ moving the quantum head to $D$ ($D = 1$ means `right' and $D
= 0$ means
 `stay') is $\alpha$ when  
 reading input symbol $a$ and  stack
 symbol $b$. The configuration of
 the quantum portion of a QCPDA  is a pair  
 $(q,k)$, where $k$ is the position of the quantum head and $q$ is in
 $Q$. It is 
 obvious that the number of configurations of the quantum portion is
 $n|Q|$, where $n$ is the input length. 

 A superposition of the configurations of the quantum portion of a QCPDA is any element of
 $l_2(Q\times\bbbz_n)$ of unit 
 length, where $\bbbz_n=\{0,1,\ldots,n-1\}$. For each
 configuration, we define a column vector 
 $\ket{q,k}$ as follows:
 \begin{itemize}
  \item $\ket{q,k}$ is an $n|Q|\times 1$ column vector.  
  \item The row corresponding to ($q,k$) is 1, and the other rows are 0.
 \end{itemize}
  For  input word $\vectorx$ (i.e., the string on the input tape
between  $\nc$ and $\dol$) and  stack symbol $a$, we define a time
  evolution operator 
  $U^{\vectorx}_{a}$ as follows: 
\[
U^{\vectorx}_{a}(\ket{q,k})
=\sum_{q'\in Q,D\in\{0,1\}}  \delta(q,x(k),a,q',D)\ket{q',k+D},
\]
  where $x(k)$ is the $k$-th input symbol of input $\vectorx$.
  If $U^{\vectorx}_{a}$ is unitary (for any $a \in \Gamma$ and for any 
input word $\vectorx$), that is,
  $U^{\vectorx}_{a}U^{\vectorx\dag}_{a}=U^{\vectorx\dag}_{a} 
  U^{\vectorx}_{a}=I$, where $U_a^{\vectorx\dag}$ is the transpose conjugate of
  $U_a^{\vectorx}$, 
  then the corresponding QCPDA is 
  well-formed. A well-formed QCPDA is considered valid in
  terms of the quantum 
  theory. We consider only well-formed QCPDAs.

We describe how the quantum portion and
the classical stack of a QCPDA work in the following.

  Let the initial quantum state and the initial position of the head be
  $q_0$ and `0', respectively. We define $\ket{\psi_0}$ as 
  $\ket{\psi_0}=\ket{q_0,0}$. 
  We also define $E_{w}$, $E_{acc}$ and
  $E_{rej}$ as follows: 
\begin{eqnarray*} 
   E_{w}&= &span\{\ket{q,k}| \sigma(q)=w \},\\
   E_{acc}&=& span\{\ket{q,k}| q\in Q_{acc}\},\\
   E_{rej}&=& span\{\ket{q,k}| q\in Q_{rej}\}.
\end{eqnarray*}
  We define observable $\cal O$ as ${\cal O}=\oplus_j E_j$, where
  $j$ is `acc', `rej' or $w\in G\cup \{\varepsilon,pop\}$. For notational simplicity, we 
  define the outcome of a measurement corresponding to $E_j$ as $j$.  

  A QCPDA computation proceeds as follows: 

{\noindent\underline{  For input word $\vectorx$, the quantum portion works as follows:}}

  \begin{description}
  \item[(a)] $U^{\vectorx}_{a}$ is applied to $\ket{\psi_i}$. Let
	 $\ket{\psi_{i+1}}=U^{\vectorx}_{a}\ket{\psi_i}$, where $a$ is
	     the stack top symbol.
  \item[(b)]  $\ket{\psi_{i+1}}$ is measured with respect to the observable
	 ${\cal O}=\oplus_j E_j$. Let $\ket{\phi_j}$ be the projection of
	     $\ket{\psi_{i+1}}$ to $E_j$. Then each outcome $j$ is
	     obtained with 
	     probability $|\ket{\phi_j}|^2$. Note that this measurement causes
	     $\ket{\psi_{i+1}}$ to collapse to
	     $\frac{1}{|\ket{\phi_j}|}\ket{\phi_j}$, where $j$ is 
	     the obtained outcome. Then go to (c).
  \end{description}
\noindent\underline{The classical stack
works as follows:} 
  \begin{description}
   \item[(c)] Let the outcome of the
	      measurement  be $j$. If $j$ is `acc' (resp. `rej') then it outputs `accept'
	      (resp. `reject'), and the computation halts. If $j$ is `$\varepsilon$', then the stack is
	      unchanged. If $j$ is `$pop$', then the stack top symbol is
	      popped. Otherwise ($j$ is a word in $G$ in
	      this case), word $j$ is pushed. Then, go to (a) and repeat.
  \end{description}


\section*{Appendix C: State Transition Function of the QPAG that Solves Problem I}
We describe the state transition function of the QPAG $M=(Q, \Sigma, \Gamma,
\delta, q_0, Q_{acc},$ $Q_{rej})$ that solves Problem I in the following,
where $Q=\{q_0\} \cup \{q_i^{I,j}, q_i^{I,j} | i\in\{1, 2\}, j\in\{0, 1\} \} \cup 
\{q_f^{acc}, q_f^{rej}, q_f^{-,0}, q_f^{-,1}\}$, $\Sigma=\{a,b,c,d,\#,\nc,\dol\}$, 
$\Gamma=\{a,b,$ $c,Z\}$, the initial state is $q_0$, $Q_{acc}=\{q_f^{acc}\}$, and $Q_{rej}=\{q_f^{rej}\}$.
Note that sub-automaton $M_1$ (resp. $M_2$) consists of the states $\{q_0, q_1^{I,0}, q_1^{I,1}, q_1^{O,0}, q_1^{O,1}\}$
(resp. $\{q_0, q_2^{I,0}, q_2^{I,1},$ $q_2^{O,0}, q_2^{O,1}\}$).
We first describe the outline of the behavior of $M$.
$M$ consists of the following three stages:
\begin{description}
\item[Stage I] $M$ pushes $w_1$ into the stack.
\item[Stage II] $M_1$ and $M_2$ run in a superposition. 
\begin{description}
\item [$M_1$] runs in a superposition of states $q_1^{I,0}$ and $q_1^{I,1}$. 
 $M_1$ reads $w_2$ and pops the stack-top symbol
one by one. If the input symbol is different from the stack-top symbol, the current state
changes from $q_1^{I,x}$ to $q_1^{I,x\oplus 1}$ ($x\in\{0,1\}$), otherwise $M_1$ stays at the same state.
Then, $M_1$ skips $w_3$.
\end{description}
\begin{description}
\item [$M_2$] runs in a superposition of states $q_2^{I,0}$ and $q_2^{I,1}$. 
 $M_2$ skips $w_2$. Then, $M_2$ reads $w_3$ and pops the stack-top symbol
one by one. If the input symbol is different from the stack-top symbol, the current state
changes from $q_2^{O,x}$ to $q_2^{O,x\oplus 1}$ ($x\in\{0,1\}$), otherwise $M_2$ stays at the same state.
\end{description}
\item[Stage III] $M$ reads the right-endmarker, and then, Hadamard transform is applied to $M$.
\end{description}

We describe the state transition function below.
In the following, * denotes a wild card, which matches any of $a, b, c, Z$.

{\noindent \bf Stage I}
\begin{flalign*}
\begin{array}{lll}
\delta(q_0, \nc, Z, q_0, \varepsilon, 1) = 1, &&\\
\delta(q_0, a, *, q_0, a, 1) = 1,&
\delta(q_0, b, *, q_0, b, 1) = 1,&
\delta(q_0, c, *, q_0, c, 1) = 1,\\
\delta(q_0, \#, *, q_1^{I,0}, \varepsilon, 1) = \frac{1}{2},&
\delta(q_0, \#, *, q_1^{I,1}, \varepsilon, 1) = -\frac{1}{2},&\\
\delta(q_0, \#, *, q_2^{I,0}, \varepsilon, 1) = \frac{1}{2},&
\delta(q_0, \#, *, q_2^{I,1}, \varepsilon, 1) = -\frac{1}{2}
\end{array}
\end{flalign*}

{\noindent \bf Stage II}
\begin{flalign*}
\begin{array}{lll}
\delta(q_1^{I,0}, a, a, q_1^{I,0}, pop, 1) = 1,&
\delta(q_1^{I,0}, b, b, q_1^{I,0}, pop, 1) = 1,&
\delta(q_1^{I,0}, c, c, q_1^{I,0}, pop, 1) = 1,
\end{array}&&
\end{flalign*}
\begin{flalign*}
\begin{array}{lll}
\\
\delta(q_1^{I,0}, a, b, q_1^{I,1}, pop, 1) = 1,&
\delta(q_1^{I,0}, a, c, q_1^{I,1}, pop, 1) = 1,&\\
\delta(q_1^{I,0}, b, a, q_1^{I,1}, pop, 1) = 1,&
\delta(q_1^{I,0}, b, c, q_1^{I,1}, pop, 1) = 1,&\\
\delta(q_1^{I,0}, c, a, q_1^{I,1}, pop, 1) = 1,&
\delta(q_1^{I,0}, c, b, q_1^{I,1}, pop, 1) = 1,&\\
\delta(q_1^{I,0}, \#, *, q_1^{O,0}, \varepsilon, 1) = 1,
\end{array}&&
\end{flalign*}
\begin{flalign*}
\begin{array}{lll}
\delta(q_1^{I,1}, a, a, q_1^{I,1}, pop, 1) = 1,&
\delta(q_1^{I,1}, b, b, q_1^{I,1}, pop, 1) = 1,&
\delta(q_1^{I,1}, c, c, q_1^{I,1}, pop, 1) = 1,\\
\delta(q_1^{I,1}, a, b, q_1^{I,0}, pop, 1) = 1,&
\delta(q_1^{I,1}, a, c, q_1^{I,0}, pop, 1) = 1,&\\
\delta(q_1^{I,1}, b, a, q_1^{I,0}, pop, 1) = 1,&
\delta(q_1^{I,1}, b, c, q_1^{I,0}, pop, 1) = 1,&\\
\delta(q_1^{I,1}, c, a, q_1^{I,0}, pop, 1) = 1,&
\delta(q_1^{I,1}, c, b, q_1^{I,0}, pop, 1) = 1,&\\
\delta(q_1^{I,1}, \#, *, q_1^{O,1}, \varepsilon, 1) = 1,\\
\\
\delta(q_1^{O,0}, a, Z, q_1^{O,0}, \varepsilon, 1) = 1,&
\delta(q_1^{O,0}, b, Z, q_1^{O,0}, \varepsilon, 1) = 1,&\\
\delta(q_1^{O,0}, c, Z, q_1^{O,0}, \varepsilon, 1) = 1,&
\delta(q_1^{O,0}, d, Z, q_1^{O,0}, \varepsilon, 1) = 1,
\end{array}&&
\end{flalign*}
\begin{flalign*}
\begin{array}{lll}
\delta(q_1^{O,1}, a, Z, q_1^{O,1}, \varepsilon, 1) = 1,&
\delta(q_1^{O,1}, b, Z, q_1^{O,1}, \varepsilon, 1) = 1,&\\
\delta(q_1^{O,1}, c, Z, q_1^{O,1}, \varepsilon, 1) = 1,&
\delta(q_1^{O,1}, d, Z, q_1^{O,1}, \varepsilon, 1) = 1,&\\
\\
\delta(q_2^{I,0}, a, *, q_2^{I,0}, \varepsilon, 1) = 1,&
\delta(q_2^{I,0}, b, *, q_2^{I,0}, \varepsilon, 1) = 1,&
\delta(q_2^{I,0}, c, *, q_2^{I,0}, \varepsilon, 1) = 1,\\
\delta(q_2^{I,0}, \#, *, q_2^{O,0}, \varepsilon, 1) = 1,\\
\\
\delta(q_2^{I,1}, a, *, q_2^{I,1}, \varepsilon, 1) = 1,&
\delta(q_2^{I,1}, b, *, q_2^{I,1}, \varepsilon, 1) = 1,&
\delta(q_2^{I,1}, c, *, q_2^{I,1}, \varepsilon, 1) = 1,\\
\delta(q_2^{I,1}, \#, *, q_2^{O,1}, \varepsilon, 1) = 1,\\
\\
\delta(q_2^{O,0}, a, a, q_2^{O,0}, pop, 1) = 1,&
\delta(q_2^{O,0}, b, b, q_2^{O,0}, pop, 1) = 1,&
\delta(q_2^{O,0}, c, c, q_2^{O,0}, pop, 1) = 1,\\
\delta(q_2^{O,0}, a, b, q_2^{O,1}, pop, 1) = 1,&
\delta(q_2^{O,0}, a, c, q_2^{O,1}, pop, 1) = 1,&
\delta(q_2^{O,0}, a, d, q_2^{O,1}, pop, 1) = 1,\\
\delta(q_2^{O,0}, b, a, q_2^{O,1}, pop, 1) = 1,&
\delta(q_2^{O,0}, b, c, q_2^{O,1}, pop, 1) = 1,&
\delta(q_2^{O,0}, b, d, q_2^{O,1}, pop, 1) = 1,\\
\delta(q_2^{O,0}, c, a, q_2^{O,1}, pop, 1) = 1,&
\delta(q_2^{O,0}, c, b, q_2^{O,1}, pop, 1) = 1,&
\delta(q_2^{O,0}, c, d, q_2^{O,1}, pop, 1) = 1,\\
\\
\delta(q_2^{O,1}, a, a, q_2^{O,1}, pop, 1) = 1,&
\delta(q_2^{O,1}, b, b, q_2^{O,1}, pop, 1) = 1,&
\delta(q_2^{O,1}, c, c, q_2^{O,1}, pop, 1) = 1,\\
\delta(q_2^{O,1}, a, b, q_2^{O,0}, pop, 1) = 1,&
\delta(q_2^{O,1}, a, c, q_2^{O,0}, pop, 1) = 1,&
\delta(q_2^{O,1}, a, d, q_2^{O,0}, pop, 1) = 1,\\
\delta(q_2^{O,1}, b, a, q_2^{O,0}, pop, 1) = 1,&
\delta(q_2^{O,1}, b, c, q_2^{O,0}, pop, 1) = 1,&
\delta(q_2^{O,1}, b, d, q_2^{O,0}, pop, 1) = 1,\\
\delta(q_2^{O,1}, c, a, q_2^{O,0}, pop, 1) = 1,&
\delta(q_2^{O,1}, c, b, q_2^{O,0}, pop, 1) = 1,&
\delta(q_2^{O,1}, c, d, q_2^{O,0}, pop, 1) = 1,
\end{array}&&
\end{flalign*}

{\noindent \bf Stage III}
\begin{flalign*}
\begin{array}{lll}
\delta(q_1^{O,0}, \dol, Z, q_f^{-,0}, \varepsilon, 1) = \frac{1}{2},&
\delta(q_1^{O,0}, \dol, Z, q_f^{acc}, \varepsilon, 1) = \frac{1}{2},&\\
\delta(q_1^{O,0}, \dol, Z, q_f^{-,1}, \varepsilon, 1) = \frac{1}{2},&
\delta(q_1^{O,0}, \dol, Z, q_f^{rej}, \varepsilon, 1) = \frac{1}{2},\\
\end{array}&&
\end{flalign*}
\begin{flalign*}
\begin{array}{lll}
\delta(q_1^{O,1}, \dol, Z, q_f^{-,0}, \varepsilon, 1) = \frac{1}{2},&
\delta(q_1^{O,1}, \dol, Z, q_f^{acc}, \varepsilon, 1) = -\frac{1}{2},&\\
\delta(q_1^{O,1}, \dol, Z, q_f^{-,1}, \varepsilon, 1) = \frac{1}{2},&
\delta(q_1^{O,1}, \dol, Z, q_f^{rej}, \varepsilon, 1) = -\frac{1}{2},\\
\end{array}&&
\end{flalign*}
\begin{flalign*}
\begin{array}{lll}
\delta(q_2^{O,0}, \dol, Z, q_f^{-,0}, \varepsilon, 1) = -\frac{1}{2},&
\delta(q_2^{O,0}, \dol, Z, q_f^{acc}, \varepsilon, 1) = -\frac{1}{2},&\\
\delta(q_2^{O,0}, \dol, Z, q_f^{-,1}, \varepsilon, 1) = \frac{1}{2},&
\delta(q_2^{O,0}, \dol, Z, q_f^{rej}, \varepsilon, 1) = \frac{1}{2},\\
\\
\delta(q_2^{O,1}, \dol, Z, q_f^{-,0}, \varepsilon, 1) = -\frac{1}{2},&
\delta(q_2^{O,1}, \dol, Z, q_f^{acc}, \varepsilon, 1) = \frac{1}{2},&\\
\delta(q_2^{O,1}, \dol, Z, q_f^{-,1}, \varepsilon, 1) = \frac{1}{2},&
\delta(q_2^{O,1}, \dol, Z, q_f^{rej}, \varepsilon, 1) = -\frac{1}{2},
\end{array}&&
\end{flalign*}
It is straightforward to see that the corresponding time evolution operator can be extended
to be unitary.

\medskip

{\bf \noindent Remark}

The reason why we can use the same technique as in Theorem~3.1 of \cite{mur-dqpa} even though our model and the model used in \cite{mur-dqpa}
seems incomparable is the following. 
When the stack-top symbol is popped, it is always written in the garbage tape in our model. This makes an 
entanglement between the stack contents and the garbage tape. Sometimes, this can be an unwanted behavior and make
our model weaker than the model in \cite{mur-dqpa}. However, in our algorithm shown in the proof of Theorem~\ref{theorem:qpag_p1},
the contents of the garbage tape at the moment of reading 
the right-endmarker can be 
the same between the two sub-automata. Therefore, the stack contents and the garbage tape are separable at the moment 
of reading the right-endmarker, which
causes no problem when using the same technique in Theorem~3.1 of \cite{mur-dqpa}.

\end{document}